\documentclass[a4paper,12pt]{article}

\usepackage{graphicx}
\usepackage{amsfonts,amsmath,amsthm,amssymb}
\usepackage{multirow}
\usepackage{booktabs}
\usepackage{tikz}
\usepackage{changes} 
\usepackage{mathtools}
\usepackage[authoryear]{natbib}
\usepackage{makecell} 
\usepackage{blindtext}
\usepackage{epigraph} 

\usepackage{fullpage}

\usepackage{footnote}
\usepackage{cancel}

\usepackage{setspace}

\def\mR{\mathbb{R}}
\def\mN{\mathbb{N}}

\def\R{\Rightarrow}

\def\LR{\Leftrightarrow}

\newtheorem{defi}{Definition}[section]
\newtheorem{propo}{Proposition}[section]
\newtheorem{lema}{Lemma}[section]
\newtheorem{coro}{Corollary}[section]
\newtheorem{ex}{Example}
\newtheorem{remark}{Remark}[section]
\def\N{\mathbb{N}}

\def\implica{\Rightarrow}

\def\pref{\succeq}
\def\prefs{\succ}

\def\implica{\Rightarrow}

\large

\newcommand{\pg}{p_{\gamma}}

\definecolor{gray}{rgb}{0.75,0.75,0.75}

\definecolor{MyDarkBlue}{rgb}{.1,0,.2}
\definecolor{MyBlue}{rgb}{.1,.1,.5}
\definecolor{MyBlue2}{rgb}{0.00,0.50,1.00}
\definecolor{MyBlue3}{rgb}{0.00,0.25,0.50}
\definecolor{MyBlue4}{rgb}{0.00,0.00,0.82}
\definecolor{MyBlack}{rgb}{0.00,0.00,0.00}
\definecolor{MyDarkBlue}{rgb}{0,0,1}
\definecolor{MyPink}{rgb}{0.81,0.06,0.46}
\definecolor{MyDarkGreen}{rgb}{0.00,0.50,0.25}
\definecolor{MyGreen}{rgb}{0,.4,0} 
\definecolor{MyGreen2}{rgb}{0.00,0.50,0.00} 
\definecolor{Red}{rgb}{1.00,0.00,0.00}{}
\definecolor{MyRed}{rgb}{.5,0,.5} 
\definecolor{MyBlack}{rgb}{0.00,0.00,0.00}
\definecolor{MyPur}{rgb}{.5,0,.25}

\def\@biblabel#1{\@ifnotempty{#1}{#1}}

\usepackage[format=hang,labelsep=colon,labelfont=sc]{caption}

\begin{document}


\title{\textbf{
Future-blindness and the product topology
}\thanks{
\textbf{Corresponding author}. Email: lorenzo.bastianello@unive.it } }
\author{ Marcel  Andrade \\
FGV-RJ\\
\and Lorenzo Bastianello  \\
Universit\`a Ca' Foscari of Venice\\
Universit\'e Panth\'eon-Assas \\
\and Jaime Orrillo  \\
Catholic University of Bras\'ilia
}
\date{
This version: January 2026
}

\maketitle

\begin{abstract}

We study future-blind preferences, which are preferences that heavily discount the future, within the space of infinite consumption streams. We give two definitions: $N$-blindness, where agents ignore periods beyond a fixed date $N$, and eventual blindness, where all but finitely many dates are neglected. Using a topological approach, we show that the finest topology ensuring eventual blindness coincides with the product topology. This provides a behavioral foundation for continuity in the product topology, which was considered for studying equilibrium existence in infinite-dimensional spaces. Finally, we characterize the dual spaces under these topologies.

\hfill \break \break \vspace{ -.5 mm }\textbf{Keywords: }Myopic Agents; Product Topology; Locally Convex Topologies; Non-Hausdorff Topologies.\newline

\noindent \textbf{JEL Codes:} C65, D10
\end{abstract}

\newpage

\epigraph{``Apr\`es moi, le d\'eluge." }{\textit{Louis XV (1710-1774)}}

\section{Introduction}

The evaluation of infinite utility or consumption streams is a central problem in social choice theory, intergenerational welfare, and infinite-horizon economic models.  Since \citeauthor{diamond1965evaluation}'s [\citeyear{diamond1965evaluation}] seminal impossibility result, it has been well understood that natural requirements such as continuity, Pareto efficiency, and intergenerational equity cannot be simultaneously satisfied when the horizon is infinite. Intergenerational equity, also called anonymity, is an ethically appealing property that requires interchanging the income of two  generations should not alter the evaluation of the stream. Much of the literature has focused on how to reconcile equity with the two other properties, see \cite{pivato2024intergenerational} for an extensive survey on this topic. As already noted in  \cite{diamond1965evaluation}, equity can be interpreted as a form of extreme patience: the timing of consumption does not affect agents' preferences.

This paper takes the opposite viewpoint and studies a class of preference relations that we call \emph{future-blind}. Informally, future-blind preferences exhibit extreme forms of impatience: sufficiently distant generations are eventually ignored, even if large compensations are offered. While such behavior may be ethically controversial, it remains relevant. One can think of  infinitely lived but boundedly rational agents who maximize a ``short run utility”, as in \cite{lovo2010myopia}. Alternatively, agents may rationally discard all future dates beyond some point in time as it is too costly to consider all future periods (rational inattention). Finally, some decision maker may not care at all about future generations (the quote by Luis XV serves as an example).

We introduce two notions of future-blindness. The first one, \textit{$N$-blindness}, describes preferences of an agent who neglects all future periods starting from an exogenously given date $N$. The second more flexible notion, \textit{eventual blindness}, requires that there exists some (endogenous) finite horizon beyond which changes to the stream no longer affect strict preferences. This finite date exists even if the agent is compensated at a growing rate for receiving outcomes later in the future.

We study future-blindness following a topological approach.  Continuity of preferences is often seen as a mathematical requirement and it is interpreted as a robustness assumption. Intuitively, if preferences are continuous,  the ranking between streams should not change if the streams are slightly perturbed. It is well-known that in infinite dimensional spaces, such as $l^\infty$, continuity brings much more structure. As \cite{svensson1980equity} put it:
\begin{quote}
``A continuity assumption of preferences is not only a mathematical assumption, which is often the case in finite-dimensional spaces but also reflects a value judgement."
\end{quote}
Continuity with respect to different topologies entails specific discounting behavior. To give an intuition, consider the sup-norm topology over $l^\infty$ in which the distance between sequences ${\bf x}$ and ${\bf y}$ is given by $\sup_n |{\bf x}(n)-{\bf y}(n)|$. Consider the sequence (of sequences) $({\bf e}_i)_i$ where each ${\bf e}_i$ is equal to 1 in position $i$ and 0 otherwise. Note that the sup-norm-distance between ${\bf e}_i$  and the sequence ${\bf 0}$ is always 1, no matter how big the index $i$ is. Thus, continuity with respect to the sup-norm topology can be interpreted as a form of patience: one unit received in the distant future does count as one unit received closer to the present. This simple example shows that the sup-norm topology is not indicated to deal with future-blind preferences. Actually, \cite{lauwers1997continuity} argues that only topologies that are finer than the sup-norm topology can deal with equity/patience.\footnote{We underline that this is just an example and that continuity with respect to sup-norm topology is not equivalent to patience. For instance it is well known that the functional $\sum_t\beta^t u(x_t)$ defined on $l^\infty$ and axiomatized in \cite{koopmans1960stationary} is continuous with respect to the sup-norm topology.} The literature on (impossibility of) patience include the works of \cite{svensson1980equity}, \cite{campbell1985impossibility}, \cite{shinotsuka1997equity} \cite{sakai2003axiomatic}, \cite{banerjee2008continuity} and \cite{alcantud2014ordering}.

Our main objective is to clarify how these behavioral properties relate to continuity with respect to standard topologies on $l^\infty$, the space of bounded real-valued sequences. Building on the approach of \cite{brown1981myopic}, who provided a behavioral characterization of the Mackey topology via myopia,\footnote{Myopic agents discount the future too, but not as strongly as future-blind decision makers. A precise definition of myopia is given in Section~\ref{sec:Nblind}.} we ask the following question: which continuity assumptions correspond to the extreme forms of tail insensitivity implied by future-blindness? Our main results provide a sharp answer. First, we show that the largest topology under which all continuous preferences are $N$-blind is characterized by the fact that sequences differing from $0$ only after period $N$ are arbitrarily close to the ${\bf 0}$ sequence. Second, our main proposition shows that the finest locally convex topology under which all continuous preferences are eventually blind coincides with the product topology. This establishes a direct behavioral interpretation of product-topology continuity: it corresponds to welfare comparisons that eventually disregard all sufficiently remote generations, regardless of compensating improvements.

Our result parallels \cite{brown1981myopic} main theorem that shows that the myopic topology is equivalent to the Mackey topology. It is a significant result as the Mackey topology was used by \cite{bewley1972existence} to prove the existence of equilibria in the infinite dimensional space $L^1$. Proposition~\ref{prop:Tb=Tp} fills a gap in the literature: the product topology was used by \cite{peleg1970markets} to show the existence of equilibria when the commodity space is $l^\infty$. Therefore, we explicitly provide a behavioral foundation for continuity with respect to the product topology: it implies that agents are eventually blind. It is worth noticing that continuity with respect to the product topology is a very convenient assumption. In fact, budget sets are compact in the product topology (by Tychonoff theorem) and hence the maximization of a continuous function yields the existence of optimal solutions. See \cite{lauwers1997continuity} for a discussion about the efficiency of the product topology.

Finally, we study dual spaces when $l^\infty$ is paired with our different topologies. Dual spaces contain continuous linear functionals, which generalize the notion of prices in infinite dimensions. We show that the dual space of $l^\infty$ when paired with the $N$-blind topology is the space of sequences that are equal to zero for all $n> N$. The dual space of the eventually blind topology is the space of sequences that are zero for all but finitely many elements (of course, this result is expected, as the eventually blind topology  is equal to the product topology).

The rest of the paper is organized as follows. Section~\ref{sec:MathPrelim} explains notations and contains some reminders about the topologies that we use. Section~\ref{sec:Nblind} and Section~\ref{sec:eventual-blind} present the concepts of $N$-blindness and eventual blindness respectively, and study their related topologies. Section~\ref{sec:dual_applications} investigates dual spaces. Section~\ref{sec:conclusions} concludes.


\section{\label{sec:MathPrelim} Notation and Mathematical Preliminaries}

The set of  all natural numbers $\mathbb{N}=\{ 0, 1, \dots, \}$ represents time periods. A real-valued sequence is any function ${\bf x}:\mathbb{N} \to  \mathbb{R}$ whose values are denoted ${\bf x}(n)$ or $x_n$. We also write ${\bf x} = (x_0, x_1, \dots )$. Therefore $\mathbb{R}^{\mathbb{N}}$ is the vector space of all sequences. This paper deals with the subspace  $l^\infty$ of real-valued bounded sequences. More specifically, if we denote the \textit{sup-norm}  $\| {\bf x} \|_{\infty} = \sup_{n\in \mathbb{N}}|x_n|$, then
$$
l^\infty=\{{\bf x}\in \mR^\mN :\| {\bf x} \|_{\infty} < \infty\}.
$$
The set  $l^\infty$  is interpreted as the set of all possible income or consumption streams. The sets $l^\infty_+$ and $l^\infty_{++}$ denote the subsets of $l^\infty$ with non-negative and positive sequences, respectively.  A preference relation $\pref $ on  $ l^{\infty}$ is a complete and transitive binary relation representing the tastes of a decision maker or a social planner. Note that preferences are not indexed by time; this implies that the decision maker is only making decisions in period 0.

The sum between two sequences and the multiplication by a scalar are the pointwise sum and multiplication, i.e. for all ${\bf{x},\bf{y}}\in l^\infty$ and $\alpha\in \mR$, ${\bf{x}+\bf{y}}=(x_0+y_0,x_1+y_1,\dots)$ and $\alpha{\bf{x}}=(\alpha x_0,\alpha x_1,\dots)$. We denote ${\bf{e}}_i$ the sequence that is equal to 1 in position $i$ and 0 everywhere else, i.e. ${\bf{e}}_i(i)=1$ and ${\bf{e}}_i(k)=0$ for all $k\neq i$.

In the paper, we will deal with the following subspaces of $l^\infty$:
\begin{align*}
 c_{00}^N&=\{{\bf x}\in l^\infty : x_n=0 \text{ for all } n>N \}, \\
 c_{00}&=\{{\bf x}\in l^\infty : x_n=0 \text{ for all but finitely many } n \}, \\
 c_{0} &=\{{\bf x}\in l^\infty : \lim_{n} x_n =0 \}, \\
 l^1 &=\{{\bf x}\in l^\infty : \sum_{n} |x_n|  < \infty \}.
\end{align*}
We remark that $c_{00}^N \subsetneq c_{00} \subsetneq l^1 \subsetneq  c_{0} \subsetneq l^\infty$.

For all $n\in \mN$ we define the \textit{head operator} $H_n:l^\infty\rightarrow l^\infty$ and the  \textit{tail operator} $T_n:l^\infty\rightarrow l^\infty$ for all ${\bf x}\in l^\infty$ by
\[
\begin{array}{cccccccc}
    H_n({{\bf x}}) &= &(x_0,  & \dots, & x_{n}, & 0, & 0, & \dots ),\\
    T_n({{\bf x}}) &= &(0,  & \dots, & 0, & x_{n+1},  & x_{n+2},  & \dots ).\\
\end{array}
\]

We are interested in studying how continuity of preferences translates into time discounting. Let $\tau$ be a topology on $l^{\infty}.$  A preference is continuous with respect to $ \tau $ if  for all ${\bf y} \in l^{\infty}$ the sets $\{ {\bf x} \in l^{\infty} : {\bf x}\succ {\bf y} \} $ and $\{ {\bf x}  \in l^{\infty} : {\bf y} \succ {\bf x} \}$  are $\tau$-open. The rest of this section contains some reminders about topology. 

\medskip

\noindent \textit{Reminders.}

In this paper, we restrict our attention to locally convex topologies. A topological vector space is called \textit{locally convex} if it has a neighborhood basis at the origin consisting of convex sets. Usually, it is easier to work with seminorms. A \emph{seminorm} on $l^\infty$ is a function $p:l^\infty\rightarrow \mR$ such that for all ${\bf x},{\bf y} \in l^\infty$ and for all $ \alpha\in \mR$, $(i)$ $p({\bf x}+{\bf y})\leq p({\bf x})+p({\bf y})$ and $(ii)$ $p(\alpha {\bf x})=|\alpha|p({\bf x})$. If moreover $p({\bf x})=0$ if and only if ${\bf x}=0$, then $p$ is a \emph{norm}. 

A family of seminorms $\Gamma$ \emph{generates} a topology $\tau$ if $\tau$ is the smallest topology making all seminorms in $\Gamma$ continuous. More explicitly, if $\Gamma$ is a family of seminorms on $l^\infty$, a subset $ \mathcal{U} \subset l^{\infty}$ is said to be an open neighborhood of ${\bf 0}\in l^\infty$ if ${\bf 0}\in \mathcal{U} $ and if there are a finite family of seminorms $q_1,\dots,q_k\in \Gamma$ and a real number $\epsilon > 0$ such that 
$$
 \cap_{i=1}^k \{ {\bf x}\in l^\infty : q_i({\bf x}) < \epsilon \} \subset \mathcal{U} .
$$
Every topology generated by seminorms is a locally convex topology and every locally convex topology is generated by seminorms, see \cite{aliprantis06}, Theorem 5.73.

Suppose that the family of seminorms $\Gamma$ generates topology $\tau$. Then a sequence (of sequences) $({\bf x}_n)_n$ converges in $\tau$ to a sequence ${\bf x}\in l^\infty$, denoted ${\bf x}_{n}\xrightarrow{\tau}_{n} {\bf x}$,  if and only if $\pg({\bf x}_{n}-{\bf x})\rightarrow_{n} 0$ for every $\gamma\in \Gamma $ (see \cite{aliprantis06}, Lemma 5.76). 

A locally convex topology  generated by $\Gamma$ has the \textit{Hausdorff  property} if $\cap_{q\in \Gamma } \{ {\bf x}\in l^{\infty} : q({\bf x}) =0  \} = \{ {\bf 0}\}.$ Let $\overline{A}$ denote the closure of a set $A$. Then a locally convex topology  generated by $\Gamma$ is Hausdorff if and only if $\overline {\{ {\bf 0} \} } = \{ {\bf 0}\}$, i.e. for all $ {\bf x} \not = {\bf 0}$, there exists $q\in \Gamma$ such that $q({\bf x}) \not = 0$.

Let $\tau_1$ and $\tau_2$ be two topologies on $l^{\infty}$. If $\tau_1\subseteq \tau_2$ we say that $\tau_1$ is \emph{weaker (or coarser)} than $\tau_2$ or that $\tau_2$ is \emph{stronger (or finer)} than $\tau_1$. If additionally $\tau_1\neq \tau_2$, we write $\tau_1\subsetneq \tau_2$ and we say that $\tau_1$ is strictly weaker (or strictly coarser) than $\tau_2$ or that $\tau_2$ is strictly stronger (or strictly finer) than $\tau_1$.

Given a topology $\tau$ on $l^{\infty}$, the \emph{(topological) dual} of $l^{\infty}$ with respect to $\tau$ is the set of $\tau$-continuous linear functionals on $l^{\infty}$ and it is denoted $(l^{\infty},\tau)'$.  In economics, dual spaces play an important role as prices of commodities lie in the dual.

There are several interesting and well-known topologies on $l^{\infty}$. In this paper, the ones listed below will play a central role.
\begin{itemize}
    \item The \textit{sup-norm topology}, denoted $\tau^\infty$, is generated by the sup-norm (already defined above). We denote $ba$ (resp. $ca$) the set of \textit{boundedly} (resp. \textit{countably}) \textit{additive charges} over the power set $2^\mN$. Then it is well known that the dual space $(l^{\infty},\tau^{\infty})'$ is isomorphic to $ba$,  i.e. every $\tau^{\infty}$-continuous linear functional over $l^{\infty}$ can be represented by $\int {\bf x} d\mu$ with $\mu\in ba$ (the integral is known as the Dunford integral, see Chapter 4 of \cite{rao1983theory}). 
  
    \item The \textit{product topology}, denoted $\tau^p$, is generated by the family of seminorms $\{p_n| n\in \mN\}$, where for all $n\in \mN$, $p_n({\bf x})=|x_n|$, for all ${\bf x}\in l^\infty$. It can be proved (see \cite{aliprantis06} Theorem 16.3), that the dual space $(l^{\infty},\tau^p)'$ is  isomorphic to $c_{00}$.
    
    \item The \textit{Mackey topology}, denoted  $\tau^{Mc}$, is the strongest topology in $l^\infty$ with $l^1$ as its topological dual. 

    \item The \textit{strict topology}, denoted  $\tau^{s}$,  introduced by \cite{buck1958bounded},  is generated by the family of seminorms  $q_{ \bf a}({\bf x })=\sup_n|a_nx_n|$, where ${\bf a } \in c_0$. By \cite{conway1967strict}, the strict topology coincides with the Mackey topology in $l^\infty$.
\end{itemize}
We remark that $\tau^p \subsetneq \tau^{Mc}=\tau^{s} \subsetneq  \tau^\infty$.



\section{$N$-Blindness}\label{sec:Nblind}

Section~\ref{sec:Nblind} analyzes decision makers who neglect the future starting from an exogenous time period $N\in \mN$. The aim is to find a suitable topology over $l^\infty$  that characterizes such behavior.

Recall that, for $N\in \mN$, the head and tail operators are defined respectively by $H_N({\bf x} )=(x_0,\dots,x_{N},0,\dots)$ and $T_N({\bf x} )=(0,\dots,0,x_{N+1},x_{N+2},\dots)$.  Consider the following definition. 

    \begin{defi}\label{def:Nblindpref1}
    Let  $N\in \mN$ be any fixed natural number.  A preference relation  $\pref$ on $l^{\infty}$ is $N$-blind if for all ${\bf x }, {\bf y }, {\bf z }\in l^{\infty}$, if ${\bf x }\prefs {\bf y }$ then ${\bf x }\prefs {\bf y } + T_n({\bf z})$ for all $n\geq N$.
    \end{defi}

Definition \ref{def:Nblindpref1} states that an agent is $N$-blind if, whenever she has a strict preference for stream ${\bf x }$ over stream ${\bf y }$, she will continue to strictly prefer ${\bf x }$ over ${\bf y }$ even after stream ${\bf z }$ is added to ${\bf y }$ starting from period $N$. The strict preference is maintained independently of how ``good" stream ${\bf z }$ is, such behavior indicates that the decision maker only cares about periods up to $N$.

We now give two equivalent definitions of $N$-blindness.

\begin{propo}\label{prop:Nblindpref}
Let  $N\in \mN$ be any natural number. Then the following assertions are equivalent.
\begin{itemize}
    \item[(i)]  $\pref$ is  $N$-blind.
    \item[(ii)] For all ${\bf x }, {\bf y }, {\bf z }, {\bf w } \in l^{\infty}$, if ${\bf x }\prefs {\bf y }$ then ${\bf x } + T_n({\bf w})\prefs {\bf y } + T_n({\bf z})$ for all $n\geq N$.
    \item[(iii)] For all $ {\bf x }, {\bf y }  \in l^\infty$, if $H_N({\bf x})=H_N({\bf y})$ then ${\bf x }\sim{\bf y }$.
    \item[(iv)] For all $ {\bf x }\in l^\infty, H_N({\bf x })\sim{\bf x } $.
\end{itemize}
\end{propo}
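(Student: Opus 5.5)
The plan is to prove the cycle (i)$\Rightarrow$(iv)$\Rightarrow$(iii)$\Rightarrow$(ii)$\Rightarrow$(i). The direct route (i)$\Rightarrow$(ii) looks awkward because (i) only lets us perturb the less-preferred stream. Throughout I will use two facts that follow from the definitions of the head and tail operators. First, ${\bf x}=H_N({\bf x})+T_N({\bf x})$. Second, for every $n\ge N$ the sequence $T_n({\bf z})$ vanishes at all indices $\le n$, hence at all indices $\le N$, so $H_N({\bf x}+T_n({\bf z}))=H_N({\bf x})$.

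For (i)$\Rightarrow$(iv) I argue by contradiction, using completeness. Suppose ${\bf x}\prefs H_N({\bf x})$. Apply (i) with ${\bf y}=H_N({\bf x})$, ${\bf z}={\bf x}$ and $n=N$. This gives ${\bf x}\prefs H_N({\bf x})+T_N({\bf x})={\bf x}$, which contradicts irreflexivity of $\prefs$. Suppose instead $H_N({\bf x})\prefs {\bf x}$. Apply (i) to this pair with ${\bf z}=-{\bf x}$ and $n=N$. This gives $H_N({\bf x})\prefs {\bf x}-T_N({\bf x})=H_N({\bf x})$, again a contradiction. By completeness, $H_N({\bf x})\sim{\bf x}$. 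This is the only step that needs a small idea: choosing ${\bf z}=\pm{\bf x}$ so that the tail either gets added back or cancelled.

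For (iv)$\Rightarrow$(iii), assume $H_N({\bf x})=H_N({\bf y})$. Then ${\bf x}\sim H_N({\bf x})=H_N({\bf y})\sim{\bf y}$, and transitivity gives ${\bf x}\sim{\bf y}$.

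For (iii)$\Rightarrow$(ii), take ${\bf x}\prefs{\bf y}$ and $n\ge N$. By the second fact, $H_N({\bf x}+T_n({\bf w}))=H_N({\bf x})$ and $H_N({\bf y}+T_n({\bf z}))=H_N({\bf y})$. By (iii), ${\bf x}+T_n({\bf w})\sim{\bf x}\prefs{\bf y}\sim{\bf y}+T_n({\bf z})$. The standard transitivity of $\sim$ combined with $\prefs$, valid for a complete and transitive relation, then yields ${\bf x}+T_n({\bf w})\prefs{\bf y}+T_n({\bf z})$.

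Finally, (ii)$\Rightarrow$(i) is immediate by taking ${\bf w}={\bf 0}$, since $T_n({\bf 0})={\bf 0}$. I expect no real obstacle beyond checking the index conventions. In particular, $T_N$ starts at index $N+1$ while $H_N$ keeps index $N$, so the decomposition ${\bf x}=H_N({\bf x})+T_N({\bf x})$ is exact and $T_n$ for $n\ge N$ never touches the head.
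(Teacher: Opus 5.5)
Your proof is correct. It differs from the paper's mainly in how the implications are organized.

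The paper proves (i)$\Leftrightarrow$(ii), (i)$\Leftrightarrow$(iii) and (iii)$\Leftrightarrow$(iv) separately. In particular it establishes (i)$\Rightarrow$(ii) directly. Assuming ${\bf y}+T_n(\bar{{\bf z}})\succeq {\bf x}+T_n(\bar{{\bf w}})$, it applies (i) once to get ${\bf x}\succ {\bf x}+T_n(\bar{{\bf w}})$. It then applies (i) a second time with ${\bf z}=-\bar{{\bf w}}$ to cancel the tail and reach ${\bf x}\succ{\bf x}$.

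You avoid that step by making the indifference characterization the hub of a cycle. Then (ii) follows from (iii) by transitivity, once you note that $H_N({\bf x}+T_n({\bf w}))=H_N({\bf x})$ for $n\geq N$.

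Your key step (i)$\Rightarrow$(iv) is the paper's (i)$\Rightarrow$(iii) argument, which chooses ${\bf z}={\bf x}-{\bf y}$ so that ${\bf y}+T_N({\bf z})={\bf x}$, specialized to ${\bf y}=H_N({\bf x})$. You are slightly more explicit in two places:
\begin{itemize}
\item you treat the reverse case $H_N({\bf x})\succ{\bf x}$ with ${\bf z}=-{\bf x}$, which the paper leaves to symmetry;
\item your (iii)$\Rightarrow$(ii) covers every $n\geq N$, whereas the paper's (iii)$\Rightarrow$(i) writes out only $n=N$.
\end{itemize}

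As for what each approach buys: your cycle needs four implications instead of six, and it makes clear that $N$-blindness is simply indifference to everything after date $N$. The paper's direct (i)$\Rightarrow$(ii) shows, by a tail-cancellation trick, how robustness to improving the worse stream upgrades to robustness to also worsening the better one.
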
 
\begin{proof}
$(i)\Rightarrow (ii)$. Suppose by contradiction that ${\bf x }\prefs {\bf y }$ but there are $\bar{{\bf  z}}, \bar{{\bf w} } \in l^{\infty}$ and $n\geq N$ such that ${\bf y } + T_n({\bf \bar{z}})\pref {\bf x } + T_n({\bf \bar{w}})$. Then we get ${\bf x} \prefs {\bf y } + T_n({\bf \bar{z}})\succeq {\bf x } + T_n({\bf \bar{w}})$ where the strict preference comes from $(i)$ and the weak one by the hypothesis. But then ${\bf x } \prefs {\bf x } + T_n({\bf \bar{w}})$ and taking ${\bf z}=-{\bf \bar{w}}$ by $N$-blindness we obtain 
${\bf x } \prefs {\bf x } + T_n({\bf \bar{w}})+T_n(-{\bf \bar{w}})={\bf x }$, a contradiction.

\medskip

$(ii)\Rightarrow (i)$. Obvious.

\medskip 

$(i)\Rightarrow (iii)$. Take  $ {\bf x }, {\bf y }$ such that $H_N({\bf x})=H_N({\bf y})$. Suppose that  ${\bf x }\succ {\bf y }$. If $T_N({\bf x})=T_N({\bf y})$, then ${\bf x}={\bf y}$ and ${\bf x }\succ {\bf y }$ yields a contradiction. If $T_N({\bf x})\neq T_N({\bf y})$, then taking ${\bf z}={\bf x}-{\bf y}$ in    $(i)$ we get ${\bf x }\succ {\bf y }+T_N({\bf z} )= {\bf y} + T_N({\bf x}-{\bf y})= {\bf y} + T_N({\bf x})-T_N({\bf y})= H_N({\bf y})+T_N({\bf y})+T_N({\bf x})-T_N({\bf y})= H_N({\bf y}) + T_N({\bf x}) = H_N({\bf x}) + T_N({\bf x})={\bf x}$, another contradiction. 

\medskip 

$(iii)\Rightarrow (i)$. Take  ${\bf x }, {\bf y }, {\bf z }\in l^{\infty}$ such that ${\bf x }\prefs {\bf y }$. Then since $H_N({\bf y})=H_N( {\bf y} + T_N({\bf z}))$, by point $(iii)$ we have  ${\bf y}\sim {\bf y} + T_N({\bf z})$. Therefore, $ {\bf x }\prefs {\bf y }+T_N({\bf z})$. 

$(iii)\Rightarrow (iv)$. Obvious.

$(iv)\Rightarrow (iii)$. Take ${\bf x }, {\bf y }\in l^{\infty}$ and suppose $H_N({\bf x})=H_N({\bf y})$. Then ${\bf x}\sim H_N({\bf x})=H_N({\bf y})\sim {\bf y}$.

\end{proof}

The result above gives alternative characterizations of $N$-blindness. Points $(iii)$ and $(iv)$ simply say that an $N$-blind decision maker is indifferent whenever sequences are equal until period $N$. Point $(ii)$ in Proposition \ref{prop:Nblindpref} says that a decision maker's strict preference  ${\bf x }\succ{\bf y }$ will not be changed not only if ${\bf y }$ is improved by ${\bf z}$ after period $N$ (as in Definition \ref{def:Nblindpref1}), but also if ${\bf x }$ is worsened through ${\bf w}$.

$N$-blindness is a very strong example of preferences for present over future consumption.  An $N$-blind agent only considers the first $N$ periods and neglects all the others. It is reminiscent of the (dynamic) model of \cite{lovo2010myopia} in which agents are infinitely lived but act in each period as if they would live only $N$ more periods. For comparison, consider the notion of myopia\footnote{\cite{brown1981myopic} considered also a notion of weak myopia, in which ${\bf z}$ is the constant unit sequence. If preferences are monotonic, these two concepts coincide.} introduced by \cite{brown1981myopic}.

 \begin{defi}[\cite{brown1981myopic}]\label{def:myopiaBL}
A preference relation $\pref$ on $l^{\infty}$ is myopic if  for all ${\bf x }, {\bf y }, {\bf z }\in l^{\infty}$, there exists $N\in \mN$ such that if ${\bf x }\prefs {\bf y }$ then ${\bf x }\prefs {\bf y } + T_n({\bf z})$ for all $n\geq N$. 
\end{defi}
Since in the definition of myopia $N$ depends on sequences ${\bf x }, {\bf y }$ and $ {\bf z }$ one can readily see that $N$-blindness is a severe form of myopia, where the period $N$ in which the agent starts neglecting the future is fixed and does not depend on the streams under consideration. Hence, for all  $N\in \mN$, an $N$-blind preference is also myopic.

\cite{brown1981myopic} introduced the definition of myopic topologies.

 \begin{defi}[\cite{brown1981myopic}]
A locally convex topology $\tau$ is myopic if every $\tau$-continuous preference is myopic. 
\end{defi}

Likewise, we introduce the following definition of  $N$-blind topologies.

 \begin{defi}
A locally convex topology $\tau$ is $N$-blind if every $\tau$-continuous preference is $N$-blind.
\end{defi}

As it turns out, \cite{brown1981myopic} show that a locally convex topology generated by a family of seminorms $\Gamma$ is myopic if and only if $p(T_n({\bf x }))\rightarrow_n 0$ for all $p\in \Gamma$ and all ${\bf x }\in l^\infty$. A seminorm with this property will be called a \textit{myopic seminorm}. The following result characterizes $N$-blindness in locally convex topologies through properties of seminorms.

\begin{propo}\label{prop:Nblind-seminorm}
Let  $\tau$ be a locally convex topology on $l^{\infty}$ generated by a family of seminorm $\Gamma$. Then  $\tau$ is $N$-blind  if and only if for all $p\in \Gamma$, $p(T_n({\bf x }))=0$ for all $n\geq N$ and all ${\bf x }\in l^\infty$.
\end{propo}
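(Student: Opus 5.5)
For sufficiency, I assume every $p\in\Gamma$ satisfies $p(T_n({\bf x}))=0$ for all $n\geq N$ and all ${\bf x}$. Take a $\tau$-continuous preference $\pref$ and ${\bf x}\prefs{\bf y}$. The set $U=\{{\bf w}: {\bf x}\prefs {\bf w}\}$ is $\tau$-open and contains ${\bf y}$. Hence there are $p_1,\dots,p_k\in\Gamma$ and $\epsilon>0$ such that ${\bf y}+\{{\bf v}: p_i({\bf v})<\epsilon,\ i=1,\dots,k\}\subset U$. For ${\bf v}=T_n({\bf z})$ with $n\geq N$, the hypothesis gives $p_i({\bf v})=0<\epsilon$ for every $i$. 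Therefore ${\bf y}+T_n({\bf z})\in U$, that is, ${\bf x}\prefs{\bf y}+T_n({\bf z})$, which is Definition~\ref{def:Nblindpref1}. The only routine point to check is that neighbourhoods of ${\bf y}$ are translates of basic neighbourhoods of ${\bf 0}$, which holds in any topological vector space.

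For necessity, I argue by contraposition. Suppose some $p\in\Gamma$, some $n\geq N$ and some ${\bf x}$ satisfy $p(T_n({\bf x}))=c>0$. I will build a $\tau$-continuous preference that is not $N$-blind. The natural choice is the preference represented by the utility $u({\bf w})=p({\bf w})$. Since $p\in\Gamma$ is $\tau$-continuous, the sets $\{{\bf w}: p({\bf w})>p({\bf y})\}$ and $\{{\bf w}: p({\bf w})<p({\bf y})\}$ are $\tau$-open, so this complete and transitive preference is $\tau$-continuous.

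To see that it fails $N$-blindness, I use characterization $(iv)$ of Proposition~\ref{prop:Nblindpref}, or $(iii)$ if needed. Take ${\bf z}=T_n({\bf x})$. Then $H_N({\bf z})={\bf 0}$ because $n\geq N$, while $u({\bf z})=c>0=u({\bf 0})=u(H_N({\bf z}))$. So ${\bf z}\not\sim H_N({\bf z})$, and $(iv)$ fails. Equivalently, in the form of Definition~\ref{def:Nblindpref1}: let ${\bf x}'=\tfrac{1}{2}T_n({\bf x})$ and ${\bf y}={\bf 0}$. Then ${\bf x}'\prefs{\bf 0}$, while ${\bf y}+T_n({\bf x})=T_n({\bf x})$ has utility $c>c/2$, which violates the definition.

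The main obstacle is the necessity direction, specifically making sure the witnessing preference is continuous in exactly the paper's sense, namely that the strict upper and lower contour sets are open. Using the seminorm $p$ itself as the utility settles this directly, since $p$ is $\tau$-continuous by the definition of the topology generated by $\Gamma$. Once that is established, the rest is the short computation above.
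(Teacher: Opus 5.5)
Your proof is correct and follows essentially the same route as the paper. For necessity you use the seminorm $p$ itself as a $\tau$-continuous utility, as the paper does; the only difference is that you argue by contraposition with the witness $\tfrac12 T_n({\bf x})$, whereas the paper scales an arbitrary $\bar{{\bf x}}$ by $1/k$. For sufficiency you use the fact that tails lie in every basic neighbourhood of ${\bf 0}$, checking Definition~\ref{def:Nblindpref1} directly, whereas the paper routes this through point $(iii)$ of Proposition~\ref{prop:Nblindpref} by contradiction.
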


\begin{proof}
$\Rightarrow$ We follow the ideas in the proof of Lemma 1 in \cite{brown1981myopic}. Suppose that $\tau$ is generated by the family of seminorms $\Gamma=\{ p_\alpha: \alpha\in A\}$. Then for all $\alpha\in A$ we can define a $\tau$-continuous preference $\pref_\alpha$ on $l^\infty$ by ${\bf x }\pref_\alpha {\bf y }$ if and only if $p_\alpha({\bf x })\geq p_\alpha({\bf y })$. \\
Fix now $p_\alpha\in\Gamma$. If $p_\alpha=0$ we are done. Otherwise there is $\bar{{\bf x }}$ such that $p_\alpha(\bar{{\bf x }})>0$. Then for all $k\in \mathbb{N}$, $\frac{1}{k}p_\alpha(\bar{{\bf x }})=p_\alpha\left(\frac{1}{k}\bar{{\bf x }}\right)>0=p_\alpha({\bf 0 })$ and therefore $\frac{1}{k}\bar{{\bf x }}\succ_\alpha {\bf 0 }$. Since $\tau$ is $N$-blind and $\pref_\alpha$ is $\tau$-continuous, for all $n\geq N$ and for all ${\bf z }\in l^\infty$ $\frac{1}{k}\bar{{\bf x }}\succ_\alpha T_n({\bf z })$. Hence $\frac{1}{k}p_\alpha(\bar{{\bf x }})> p_\alpha(T_n({\bf z }))$ for all $k\in\N$. We conclude that $p_\alpha(T_n({\bf z }))=0$ for all $n\geq N$ and all ${\bf z }\in l^\infty$.

\medskip

$\Leftarrow$ Suppose that $p\in \Gamma$, $p(T_n({\bf x }))=0$ for all $n\geq N$ and all ${\bf x }\in l^\infty$ and that $\pref$ is $\tau$-continuous. We will use point $(iii)$ of Proposition \ref{prop:Nblindpref}. Take ${\bf x },{\bf y } \in l^\infty$ such that $H_N({\bf x })=H_N({\bf y })$ but suppose by contradiction that ${\bf x }\succ{\bf y }$. Since $\pref$ is $\tau$-continuous the set $\mathcal{U}=\{{\bf z }\in l^\infty : {\bf x }\succ{\bf z }\}$ is open and ${\bf y}\in \mathcal{U}$. Since for all $p\in \Gamma$, $p({\bf y }-({\bf y} + T_N({\bf z }))) = p(T_N({\bf z }))=0$ for all ${\bf z }\in l^\infty$, we have that ${\bf y} + T_N({\bf z })\in \mathcal{U}$ for all ${\bf z }\in l^\infty$. But then taking ${\bf z }= {\bf x }-{\bf y }$ we obtain that ${\bf x }\in \mathcal{U}$, a contradiction. Therefore ${\bf x }\sim {\bf y }$.
\end{proof}

Proposition \ref{prop:Nblind-seminorm} shows that seminorms generating $N$-blind topologies assign value zero to sequences that are zero up to period $N$. These seminorms will be called \textit{$N$-blind seminorms}. Clearly, every $N$-blind seminorm is also a myopic seminorm. An example of an $N$-blind seminorm is the following.

\begin{ex}\label{ex:N-blind-seminorm}
  Note that if $k\leq N-1$, the seminorm $p_k$ defined for all ${\bf x } \in l^\infty$ by
    $$
    p_k({\bf x })=|x_k|
    $$
    is a $N$-blind seminorm by Proposition \ref{prop:Nblind-seminorm}.
\end{ex}

Since continuity of preferences is defined in terms of open upper and lower contour sets, the finer a topology, the ``easier"  for a preference relation to be continuous with respect to it. We would like therefore to consider the finest topology with the property of being $N$-blind. Consider the following definition.

\begin{defi}\label{def:tau_N_blind}
    We denote $\tau^b_N$ the finest $N$-blind locally convex topology over $l^\infty$. The family of seminorms generating it is denoted $\Gamma^b_N$.
\end{defi}

We will prove now that the topology  $\tau^b_N$ exists for all $N\in \N$ and we will give an explicit characterization of all seminorms in $\Gamma^b_N$. 

Recall that two families of seminorms are said to be equivalent if they generate the same topology. Consider two families of seminorms $\{p_\alpha: \alpha\in A\}$ and  $\{q_\beta: \beta\in B\}$. They are equivalent if and only if every seminorm $p_\alpha$ (resp. $q_\beta$) is \emph{dominated} by a finite number of seminorms in $\{q_\beta: \beta\in B\}$ (resp. $\{p_\alpha: \alpha\in A\}$). Explicitly, they are are equivalent if and only if $(i)$ for all $\alpha\in A$ there exist $\beta_1,\dots, \beta_{n_\alpha}\in B$ and $c>0$ such that $p_\alpha({\bf x })\leq c(q_{\beta_1}({\bf x })+\dots+q_{\beta_{n_\alpha}}({\bf x }))$ for all ${\bf x } \in l^\infty$; and $(ii)$ for all $\beta\in B$ there exist $\alpha_1,\dots, \alpha_{n_\beta}\in A$ and $d>0$ such that $q_\beta({\bf x })\leq d(p_{\alpha_1}({\bf x })+\dots+p_{\alpha_{n_\beta}}({\bf x }))$ for all ${\bf x } \in l^\infty$. 
We are ready to give a complete characterization of seminorms generating the topology $\tau^b_N$.

\begin{propo}\label{prop:N-blind_topo}
$\tau^b_N$ exists for all $N\in \N$,  and it is generated by  the family of seminorms $P^N=\{p_k:k=0,\dots, N\}$, where $p_k({\bf x })=|x_k|$ for all ${\bf x } \in l^\infty$.
\end{propo}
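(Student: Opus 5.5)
The plan is to show two things. First, the topology generated by $P^N$ is $N$-blind. Second, every $N$-blind locally convex topology is coarser than it. Together these show that the topology generated by $P^N$ is the finest $N$-blind locally convex topology, which proves both existence and the characterization at once. The tool throughout is Proposition~\ref{prop:Nblind-seminorm}, together with the domination criterion for comparing families of seminorms recalled just before the statement.

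\emph{Step 1: $P^N$ generates an $N$-blind topology.} For $k\le N\le n$ and any ${\bf x}\in l^\infty$, the $k$-th coordinate of $T_n({\bf x})$ is zero, since $T_n$ kills positions $0,\dots,n$. Hence $p_k(T_n({\bf x}))=0$. By Proposition~\ref{prop:Nblind-seminorm}, the topology generated by $P^N$ is $N$-blind. The only care needed is with indices: $T_N$ keeps positions from $N+1$ on, so $k=N$ is still allowed.

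\emph{Step 2: every $N$-blind seminorm is dominated by $P^N$.} Let $\tau$ be any $N$-blind locally convex topology, generated by a family $\Gamma$. By Proposition~\ref{prop:Nblind-seminorm}, each $p\in\Gamma$ satisfies $p(T_N({\bf x}))=0$ for all ${\bf x}$.

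\begin{itemize}
\item Since ${\bf x}=H_N({\bf x})+T_N({\bf x})$, the triangle inequality gives $p({\bf x})\le p(H_N({\bf x}))+p(T_N({\bf x}))=p(H_N({\bf x}))$.
\item Symmetrically, $H_N({\bf x})={\bf x}-T_N({\bf x})$ gives $p(H_N({\bf x}))\le p({\bf x})$. So $p({\bf x})=p(H_N({\bf x}))$.
\item Writing $H_N({\bf x})=\sum_{k=0}^N x_k{\bf e}_k$, subadditivity and homogeneity give
$$p({\bf x})\le \sum_{k=0}^N |x_k|\,p({\bf e}_k)\le c\sum_{k=0}^N p_k({\bf x}),$$
where $c=\max_{k\le N}p({\bf e}_k)$, replaced by $1$ if this maximum is $0$.
\end{itemize}

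Thus every seminorm generating $\tau$ is dominated by finitely many seminorms of $P^N$. Hence $\tau$ is coarser than the topology generated by $P^N$.

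Combining the two steps, the topology generated by $P^N$ is $N$-blind and contains every $N$-blind locally convex topology. So the finest one exists and equals it, i.e.\ $\tau^b_N$ is generated by $P^N$.

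I expect no serious obstacle. The key step is the identity $p({\bf x})=p(H_N({\bf x}))$, which reduces an arbitrary $N$-blind seminorm to a seminorm on the finite-dimensional space $c_{00}^N$. On that space the domination by coordinate seminorms is immediate. The only real pitfall is the off-by-one between $H_N$ and $T_N$, which must be kept consistent with Proposition~\ref{prop:Nblind-seminorm}.
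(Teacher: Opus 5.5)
Your proof is correct and is essentially the paper's argument: the key estimate $p({\bf x})\le p(H_N({\bf x}))\le c\sum_{k=0}^N p_k({\bf x})$ for any seminorm with $p(T_N({\bf x}))=0$ is exactly what the paper uses. The only cosmetic difference is in how existence is obtained. The paper takes the family of all $N$-blind seminorms and then proves it equivalent to $P^N$, whereas you show directly that the $P^N$-topology is $N$-blind and contains every $N$-blind locally convex topology, which gives existence and the characterization at once.
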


\begin{proof}
Consider the family of  $\Gamma^b_N$ containing all seminorms $p$ such that $p(T_n({\bf x }))=0$ for all $n\geq N$ and all ${\bf x }\in l^\infty$. Example \ref{ex:N-blind-seminorm} shows that, if $k\leq N-1$, then $p_k({\bf x })=|x_k|$ belongs to  $\Gamma^b_N$ . Therefore this family is non empty. Hence $\Gamma^b_N$ generates a locally convex topology that is $N$-blind by Proposition~\ref{prop:Nblind-seminorm}. Moreover, it is the finest as it contains all such seminorms.\\
We show now that $\Gamma^b_N$ and $P^N$ generate the same topology.  By Proposition~\ref{prop:Nblind-seminorm}, every seminorm in $P^N$ is a $N$-blind seminorm. Therefore $P^N\subseteq \Gamma^b_N$.
Now, let $q\in \Gamma^b_N$ and define $$c:= \max_{i=0,...,N} q({\bf e}_i)$$
where ${\bf e}_i$ denotes the sequence equal to 0 everywhere except in position $i$, where it is equal to 1. 
By Proposition~\ref{prop:Nblind-seminorm}, $q(T_N({\bf x}))=0$. Therefore, if $c=0$, then $q=0$  and $q({\bf x})\leq p_i({\bf x})$ for all ${\bf x}\in l^\infty$, and $i=0,...,N$. On the other hand if $c>0$, notice that 
$$
q({\bf x})=q(H_N({\bf x})+T_N({\bf x}))\leq q(H_N({\bf x}))+q(T_N({\bf x}))=q(H_N({\bf x})).
$$
Since $H_N({\bf x})=\sum_{i=0}^N x_i{\bf e}_i$, we have 
$$
q({\bf x }) \leq   q\left(\sum_{i=0}^N x_i{\bf e}_i \right)
                \leq \sum_{i=0}^N  q( x_i{\bf e}_i) 
                = \sum_{i=0}^N |x_i|q({\bf e}_i)
                \leq \max_{i}q({\bf e}_i)\sum_{i=0}^N |x_i| 
                \leq c \sum_{i=0}^n p_i({\bf x }).
$$
Therefore, all the seminorms in $\Gamma^b_N$ are dominated by seminorms in $P^N$ and we can conclude that $\Gamma^b_N$ and $P^N$ generate the same topology.
\end{proof}

The following two corollaries show respectively that $\tau^{b}_N$ is not Hausdorff and that $(\tau^{b}_N)_{N\in \N}$ forms a nested sequence of topologies.

\begin{coro}\label{cor:non-Haussdorf}
$\tau^{b}_N$ is not Hausdorff and the closure of $\{ \bf 0\}$ respect to the topology $\tau^{b}_N$ is $T_{N}=\{T_N({\bf x}): {\bf x}\in l^\infty\}.$
\end{coro}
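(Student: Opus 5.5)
By Proposition~\ref{prop:N-blind_topo}, $\tau^b_N$ is generated by $P^N=\{p_k : k=0,\dots,N\}$ with $p_k({\bf x})=|x_k|$. The plan is to use the characterization of closures in a locally convex topology generated by seminorms. For the topology generated by a family $\Gamma$, a point ${\bf x}$ lies in $\overline{\{{\bf 0}\}}$ if and only if every basic neighborhood of ${\bf x}$ contains ${\bf 0}$. Such a neighborhood has the form $\cap_{i=1}^k\{{\bf y}: q_i({\bf y}-{\bf x})<\epsilon\}$ with $q_i\in\Gamma$ and $\epsilon>0$. So the condition says $q_i({\bf x})<\epsilon$ for all finite choices of $q_i$ and all $\epsilon>0$, that is, $q({\bf x})=0$ for every $q\in\Gamma$. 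Hence
$$\overline{\{{\bf 0}\}}=\bigcap_{q\in\Gamma}\{{\bf x}\in l^\infty : q({\bf x})=0\}.$$
This is the same criterion recalled in Section~\ref{sec:MathPrelim} for the Hausdorff property.

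Applying this with $\Gamma=P^N$ gives $\overline{\{{\bf 0}\}}=\{{\bf x}: x_k=0 \text{ for } k=0,\dots,N\}$. It then remains to check that this set equals $\{T_N({\bf x}):{\bf x}\in l^\infty\}$. If $x_0=\dots=x_N=0$, then ${\bf x}=T_N({\bf x})$, since $H_N({\bf x})={\bf 0}$. Conversely, every $T_N({\bf x})$ vanishes in positions $0,\dots,N$ by the definition of the tail operator.

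Non-Hausdorffness follows immediately. For instance, ${\bf e}_{N+1}=T_N({\bf e}_{N+1})\neq{\bf 0}$ lies in $\overline{\{{\bf 0}\}}$, so $\overline{\{{\bf 0}\}}\neq\{{\bf 0}\}$. Equivalently, $p_k({\bf e}_{N+1})=0$ for every $p_k\in P^N$.

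There is no real obstacle here. The only point requiring care is to justify the closure identity directly from the basic neighborhoods, rather than only citing the Hausdorff criterion. One could alternatively use the full family $\Gamma^b_N$: each of its seminorms vanishes on tails by Proposition~\ref{prop:Nblind-seminorm}, and the $p_k\in P^N\subseteq\Gamma^b_N$ force the first $N+1$ coordinates to vanish, which gives the same set.
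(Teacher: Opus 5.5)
Your proof is correct and follows essentially the same route as the paper. Both arguments rely on Proposition~\ref{prop:N-blind_topo} and on the fact that the seminorms $p_0,\dots,p_N$ vanish exactly on $T_N$, which yields both inclusions for $\overline{\{{\bf 0}\}}$ and hence non-Hausdorffness. Your version is slightly more careful, since you derive $\overline{\{{\bf 0}\}}=\bigcap_{q\in\Gamma}\{{\bf x}: q({\bf x})=0\}$ from the basic neighborhoods of ${\bf x}$, whereas the paper says only that $T_N({\bf x})$ lies in every neighborhood of ${\bf 0}$, which is equivalent by the symmetry of seminorm balls.
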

\begin{proof}
Note that for all ${\bf x}\in l^\infty$, $T_N({\bf x})$ belongs to all neighborhoods of $\{ \bf 0\}$ and therefore  $T_N\subseteq  \overline{\{{ \bf 0}\}}$. On the other hand if ${\bf x}\not\in T_N$, then there exists a seminorm $p$ in $P^N$ such that $p({\bf x})>0$, which implies that ${\bf x}\not\in \overline{\{{ \bf 0}\}}$. Hence $\overline{\{{ \bf 0}\}}\subseteq T_N$ and we obtain $\overline{\{{ \bf 0}\}}= T_N$. This result implies that $\tau^{b}_N$ is not Hausdorff.
\end{proof}

\begin{coro}\label{cor:nested}
 For any  pair  $M,N \in \mN$ the following holds:
    $$M <  N \implica \tau^{b}_M \subsetneq \tau^{b}_N \subsetneq \tau^p.$$
\end{coro}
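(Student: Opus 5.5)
The plan is to work entirely through generating families of seminorms, using Proposition~\ref{prop:N-blind_topo}: $\tau^b_M$ is generated by $P^M=\{p_0,\dots,p_M\}$, $\tau^b_N$ by $P^N=\{p_0,\dots,p_N\}$, and $\tau^p$ by $\{p_n : n\in\mN\}$, where $p_k({\bf x})=|x_k|$. To compare two locally convex topologies, I will use the criterion recalled before Proposition~\ref{prop:N-blind_topo}. A topology generated by $\Gamma_1$ is weaker than one generated by $\Gamma_2$ if and only if every seminorm of $\Gamma_1$ is dominated by a finite combination of seminorms of $\Gamma_2$. Equivalently, every seminorm of $\Gamma_1$ is continuous for the second topology.

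\textbf{Inclusions.} If $M<N$, then $P^M\subseteq P^N\subseteq\{p_n:n\in\mN\}$. Each seminorm of the smaller family therefore belongs to the larger one and is trivially dominated by itself. Hence $\tau^b_M\subseteq\tau^b_N\subseteq\tau^p$.

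\textbf{Strictness.} I will exhibit a seminorm that is continuous in the finer topology but not in the coarser one. For $\tau^b_M\subsetneq\tau^b_N$, take $p_N$. It is $\tau^b_N$-continuous because it is a generator. Suppose it were $\tau^b_M$-continuous. Then there would be $c>0$ with $p_N({\bf x})\le c\sum_{i=0}^M p_i({\bf x})$ for all ${\bf x}$. Testing at ${\bf e}_N$ gives $1\le 0$, since $N>M$, which is a contradiction. An alternative route is Corollary~\ref{cor:non-Haussdorf}: ${\bf e}_N\in T_M=\overline{\{{\bf 0}\}}^{\tau^b_M}$, but ${\bf e}_N\notin T_N=\overline{\{{\bf 0}\}}^{\tau^b_N}$. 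Since the closures of $\{\bf 0\}$ differ, the topologies differ.

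For $\tau^b_N\subsetneq\tau^p$, the same argument works with $p_{N+1}$ and ${\bf e}_{N+1}$. Even more simply, $\tau^p$ is Hausdorff while $\tau^b_N$ is not, by Corollary~\ref{cor:non-Haussdorf}.

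The only point needing care is justifying the domination criterion in the direction \emph{continuity implies domination}. This is the standard fact that a seminorm is continuous for a topology generated by $\Gamma$ if and only if it is bounded by a constant times a finite sum of seminorms in $\Gamma$. I will invoke it as recalled in the paper, or bypass it with the closure-of-${\bf 0}$ argument, which uses only Corollary~\ref{cor:non-Haussdorf}. For the comparison with $\tau^p$, that argument uses the fact that $\overline{\{{\bf 0}\}}^{\tau^p}=\{{\bf 0}\}$.
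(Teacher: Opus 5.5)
Your proof is correct and follows the same route as the paper: the generating families $P^M\subseteq P^N\subseteq\{p_n:n\in\mN\}$ from Proposition~\ref{prop:N-blind_topo} give the non-strict inclusions. The difference is in how strictness is obtained.

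The paper passes directly from $P^M\subsetneq P^N$ to $\tau^b_M\subsetneq\tau^b_N$, and from $P^N\subsetneq\bigcup_N P^N$ to $\tau^b_N\subsetneq\tau^p$. That inference is not valid by itself. A strictly larger family of seminorms can generate the same topology; for example, $P^M\cup\{2p_0\}$ generates $\tau^b_M$.

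Your argument supplies the missing step in either of two ways. The first is the domination test at ${\bf e}_N$, which gives $1\le 0$. The second avoids the domination criterion and uses only Corollary~\ref{cor:non-Haussdorf}: ${\bf e}_N\in T_M\setminus T_N$, and $\tau^b_N$ is not Hausdorff while $\tau^p$ is. So your write-up is the same proof, with strictness actually justified.
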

\begin{proof}
By Proposition \ref{prop:N-blind_topo}, $\tau^b_N$  is generated by  the family of seminorms $P^N=\{p_k:k=0,\dots, N\}$. Clearly,  if $M <  N$ then $P^M\subsetneq P^N$ and hence $\tau^{b}_M \subsetneq \tau^{b}_N$. Moreover, since  $\cup_{N\in \N} P^N$ is the set of seminorms generating the product topology, for all $N$, $ \tau^{b}_N \subsetneq \tau^p$.
\end{proof}

Topology $\tau^b_N$ of Definition \ref{def:tau_N_blind} was inspired by  \cite{brown1981myopic}, who defined in a similar fashion the \textit{myopic topology} $\tau^m$, the finest myopic locally convex topology over $l^\infty$. This topology is generated by the family of myopic seminorms (see above Proposition \ref{prop:Nblind-seminorm}). A remarkable result of \cite{brown1981myopic} is that the myopic topology $\tau^m$ and the Mackey topology $\tau^{Mc}$ coincide, i.e. $\tau^{Mc}=\tau^m$.\footnote{Therefore they also coincide with the strict topology $\tau^s$, see Section \ref{sec:MathPrelim}.} This result is important since it justifies the use of the Mackey topology in general equilibrium. In his seminal paper, \cite{bewley1972existence} assumed that agents' preferences were continuous with respect to it to get summable prices.   The fact that the  Mackey topology coincides with the myopic topology behaviorally justifies this assumption. The interpretation is therefore that if agents discount the future an equilibrium exists. The converse was proved by \cite{araujo1985lack} who showed that myopia is actually necessary to get an equilibrium.

Corollaries \ref{cor:non-Haussdorf} and \ref{cor:nested} show that we obtained a nested sequence of non-Hausdorff topologies, all strictly contained in the product topology $\tau^p$. This result implies that when $N$ tends to infinity we do not achieve any kind of patience. Actually, since the product topology is strictly weaker than the myopic topology  $\tau^m$ of \cite{brown1981myopic}, an agent with $\tau^p$-continuous preferences will be at least as myopic as an agent with $\tau^m$-continuous preferences. 

 The next section makes precise in which sense an agent ``discounts" the future if her preferences are $\tau^p$-continuous.


\section{Eventual Blindness and the Product Topology}\label{sec:eventual-blind}

Section \ref{sec:eventual-blind} presents the main result of our paper, a behavioral characterization of the product topology. We saw in Corollary \ref{cor:nested} that $(\tau^{b}_N)_{N\in \N}$ forms a nested sequence of topologies all included in the product topology $\tau^p$. Here, we give a behavioral characterization of $\tau^p$-continuous preferences in terms of how agents discount the future. Consider the following definition.

\begin{defi}\label{def:EventualBlind}
A preference relation  $\pref$ on $l^{\infty}$ is eventually blind if for all ${\bf x }, {\bf y }\in l^{\infty}$ and for all $({\bf z}_k)_{k\in \N}$ with ${\bf z}_k \in l^{\infty}$, there exists $N\in \N$ such that if ${\bf x }\prefs {\bf y }$ then  ${\bf x } \prefs {\bf y } + T_n({\bf z}_n)$ for all $n\geq N$.
\end{defi}

Definition~\ref{def:EventualBlind} can be interpreted in the following way. Suppose that an agent strictly prefers  ${\bf x }$ to ${\bf y }$. Then there exists a time-period $N$ such that for all subsequent periods $n$ she will prefer stream ${\bf x }$ to stream ${\bf y }$, even if the latter is improved through $T_n({\bf z}_n)$. Note that, contrary to Definition~\ref{def:myopiaBL} of myopia, we are considering sequences of streams and not a fixed stream added to the tail of ${\bf y }$. Therefore, the decision maker keeps her preference for ${\bf x }$ over ${\bf y }$ when the streams are added late enough, even if one is ``compensating" the fact that the agent is getting the increase later and later in time. See Example~\ref{ex:Eventual_Blind}. It is not difficult to see on the one hand that eventual blindness implies myopia and on the other hand that $N$-blindness implies eventual blindness.

\begin{ex}\label{ex:Eventual_Blind}
In this example, we illustrate the concept of eventual blindness and we compare it with the one of myopia.

We start with myopia, introduced in Definition~\ref{def:myopiaBL}. To simplify the analysis consider three constant streams  ${\bf x }=(x,x,\dots)$, ${\bf y }=(y,y,\dots)$, ${\bf z }=(z,z,\dots)$ and suppose $x>y>0$ and $z>1$ such that $y+z>x$. If an agent has monotone preferences then  ${\bf x }\prefs {\bf y }$ and also  ${\bf y }+{\bf z }	\prefs {\bf x }$. Consider the sequence $({\bf y } + T_k({\bf z}))_{k\in \N}$ shown in the table below.

\begin{center}
\begin{tabular}{c|c|c|c|c|c|c|c|c}
k & 0 & 1 & 2& 3& \dots & n & n+1 & \dots \\ 
\hline 
0 & $y$ & $y+z$ & $y+z$  & $y+z$ & \dots & $y+z$ & $y+z$ &  \dots \\ 

1 & $y$ & $y$ & $y+z$ & $y+z$  & \dots & $y+z$ & $y+z$ &  \dots \\ 

2 & $y$ & $y$ & $y$ &$y+z$ & \dots & $y+z$ & $y+z$ &  \dots \\ 
 
\dots &  \dots & \dots &  \dots &  \dots &  \dots &  \dots &  \dots &  \dots \\ 
 
n & $y$ & $y$ & $y$ &$y$ & \dots & $y$ & $y+z$ &  \dots \\ 
\end{tabular} 
\end{center}
The first row corresponds to ${\bf y }+T_0({\bf z })$, increasing $k$ can be interpreted as postponing the date at which ${\bf y }$ will be improved adding the tail of ${\bf z}$.  The last row corresponds to	${\bf y } + T_n({\bf z})$, where we postponed this improvement until period $n+1$. According to   Definition~\ref{def:myopiaBL}, this new stream is strictly worse than ${\bf x }$ for a myopic decision maker whenever $n$ is big enough.

Consider now the sequence $({\bf z}_k)_{k\in N}$ where for all $k$, ${\bf z}_k=(z^k,z^k,\dots)\in l^\infty$. Let us now represent the sequence $({\bf y } + T_k({\bf z}_k))_{k\in \N}$ on a table.

\begin{center}
\begin{tabular}{c|c|c|c|c|c|c|c|c}
k & 0 & 1 & 2& 3& \dots & n & n+1 & \dots \\ 
\hline 
0 & $y$ & $y+z$ & $y+z$  & $y+z$ & \dots & $y+z$ & $y+z$ &  \dots \\ 

1 & $y$ & $y$ & $y+z^2$ & $y+z^2$  & \dots & $y+z^2$ & $y+z^2$ &  \dots \\ 

2 & $y$ & $y$ & $y$ &$y+z^3$ & \dots & $y+z^3$ & $y+z^3$ &  \dots \\ 
 
\dots &  \dots & \dots &  \dots &  \dots &  \dots &  \dots &  \dots &  \dots \\ 
 
n & $y$ & $y$ & $y$ &$y$ & \dots & $y$ & $y+z^n$ &  \dots \\ 
\end{tabular} 
\end{center}
As before, increasing $k$ means that we postpone the date at which ${\bf y }$ will be improved. However, in this case we are ``compensating" this delay by enlarging the amount assigned to the agent. This compensation is not contemplated in Definition~\ref{def:myopiaBL} for myopia. Therefore a myopic agent may very well have preferences ${\bf y } + T_k({\bf z}_k)\pref {\bf x}$ for all $k\in \N$. On the other hand, according to Definition~\ref{def:EventualBlind}, an eventually blind agent will have the opposite strict preference provided that $k$ is sufficiently large.

\end{ex}

In analogy with what we did in Section~\ref{sec:Nblind}, we define now the concept of eventually blind topologies.

 \begin{defi}
A locally convex topology $\tau$ is eventually blind if every $\tau$-continuous preference is eventually blind.  
\end{defi}

The following proposition characterizes eventually blind topology by giving a result on convergence of tails of sequences. 

\begin{propo}\label{prop:eventuallyblind-tails}
Let  $\tau$ be a locally convex topology on $l^{\infty}$. Then  $\tau$ is eventually blind  if and only if for all $({\bf x}_k)_{k\in \N}$, $T_n({\bf x}_n)\xrightarrow{\tau}_n {\bf 0}$.
\end{propo}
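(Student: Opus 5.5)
We prove the two implications separately, mirroring the proofs of Proposition~\ref{prop:Nblind-seminorm} and Lemma 1 in \cite{brown1981myopic}. Throughout, let $\Gamma$ be a family of seminorms generating $\tau$. By the convergence criterion recalled in Section~\ref{sec:MathPrelim}, the condition ``$T_n({\bf x}_n)\xrightarrow{\tau}_n {\bf 0}$ for all $({\bf x}_k)_k$'' is equivalent to ``$p(T_n({\bf x}_n))\rightarrow_n 0$ for every $p\in\Gamma$ and every sequence $({\bf x}_k)_k$ in $l^\infty$''.

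\emph{Sufficiency ($\Leftarrow$).} Assume the tail condition holds and let $\pref$ be $\tau$-continuous. Fix ${\bf x}\prefs{\bf y}$ and a sequence $({\bf z}_k)_k$; if ${\bf x}\not\prefs{\bf y}$ there is nothing to prove.
\begin{itemize}
\item The lower contour set $\mathcal{U}=\{{\bf w}\in l^\infty:{\bf x}\prefs{\bf w}\}$ is $\tau$-open and contains ${\bf y}$.
\item Hence there are $q_1,\dots,q_k\in\Gamma$ and $\epsilon>0$ such that ${\bf y}+{\bf v}\in\mathcal{U}$ whenever $q_i({\bf v})<\epsilon$ for all $i$. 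This uses that $\tau$ is a vector topology, so neighborhoods of ${\bf y}$ are translates of basic neighborhoods of ${\bf 0}$.
\item Apply the hypothesis to the sequence $({\bf z}_k)_k$: $q_i(T_n({\bf z}_n))\to 0$ for each $i$.
\item Since there are finitely many $q_i$, there is one $N$ with $q_i(T_n({\bf z}_n))<\epsilon$ for all $n\ge N$ and all $i$.
\item Therefore ${\bf y}+T_n({\bf z}_n)\in\mathcal{U}$, that is, ${\bf x}\prefs{\bf y}+T_n({\bf z}_n)$ for all $n\ge N$.
\end{itemize}
The choice of $N$ depends only on ${\bf x},{\bf y}$ and $({\bf z}_k)$, exactly as Definition~\ref{def:EventualBlind} requires.

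\emph{Necessity ($\Rightarrow$).} Assume $\tau$ is eventually blind. Fix $p\in\Gamma$ and define ${\bf x}\pref_p{\bf y}$ iff $p({\bf x})\ge p({\bf y})$. This preference is $\tau$-continuous, since its contour sets are preimages of open sets under the continuous map $p$. If $p\equiv 0$ there is nothing to prove, so pick $\bar{\bf x}$ with $p(\bar{\bf x})>0$.

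Fix $({\bf x}_k)_k$ and $\epsilon>0$, and choose a scalar $\lambda>0$ with $0<\lambda p(\bar{\bf x})<\epsilon$. Then $\lambda\bar{\bf x}\succ_p{\bf 0}$. Applying eventual blindness with ${\bf x}=\lambda\bar{\bf x}$, ${\bf y}={\bf 0}$ and the given sequence $({\bf x}_k)_k$, we obtain $N$ such that $\lambda\bar{\bf x}\succ_p T_n({\bf x}_n)$ for all $n\ge N$. That is, $p(T_n({\bf x}_n))<\lambda p(\bar{\bf x})<\epsilon$ for all $n\ge N$. Since $\epsilon$ was arbitrary, $p(T_n({\bf x}_n))\to 0$, and as $p\in\Gamma$ was arbitrary, $T_n({\bf x}_n)\xrightarrow{\tau}_n{\bf 0}$.

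\emph{Potential obstacle.} The only subtle point is in necessity: $N$ is allowed to depend on the pair $({\bf x},{\bf y})$. This is why we rescale $\bar{\bf x}$ for each $\epsilon$, applying the definition once per $\epsilon$ with the sequence held fixed, rather than seeking one $N$ uniform in $\epsilon$. Sufficiency is routine once we note that only finitely many seminorms control the neighborhood.
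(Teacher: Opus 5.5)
Your proof is correct and follows essentially the paper's route: necessity uses the seminorm-induced preference ${\bf x}\succeq_p{\bf y}\iff p({\bf x})\ge p({\bf y})$ with a rescaled $\bar{\bf x}$, exactly as in the paper's adaptation of Lemma~1 of \cite{brown1981myopic}. For sufficiency you argue directly, using a basic neighborhood of ${\bf y}$ and a single $N$ for the finitely many seminorms involved, whereas the paper argues by contradiction via a subsequence $T_{N_k}({\bf z}_{N_k})\xrightarrow{\tau}{\bf 0}$ and closedness of $\{{\bf w}:{\bf w}\succeq{\bf x}\}$; this is the same idea, and your version is slightly cleaner because it never needs to state that $N_k\to\infty$.
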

\begin{proof}
$\Rightarrow$ Again, we follow the ideas in the proof of Lemma 1 in \cite{brown1981myopic}. Suppose that $\tau$ is generated by the family of seminorms $\Gamma=\{ p_\alpha: \alpha\in A\}$. Then for all $\alpha\in A$ we can define a $\tau$-continuous preference $\pref_\alpha$ on $l^\infty$ by ${\bf x }\pref_\alpha {\bf y }$ if and only if $p_\alpha({\bf x })\geq p_\alpha({\bf y })$. \\
Fix now $p_\alpha\in\Gamma$. If $p_\alpha=0$ we are done. Otherwise there is $\bar{{\bf x }}$ such that $p_\alpha(\bar{{\bf x }})>0$.
Fix $\varepsilon>0$, $\alpha\in A$ and let $\beta=\frac{\varepsilon}{p_\alpha(\bar{{\bf x }})}$. Therefore $\varepsilon=p_\alpha(\beta \bar{{\bf x }})>0=p_\alpha({\bf 0})$ and $\beta\bar{{\bf x }}\prefs_\alpha {\bf 0}$. Since $\pref_\alpha$ satisfies eventual blindness of Definition~\ref{def:EventualBlind}, we have that $\beta\bar{{\bf x }}\prefs_\alpha T_n({\bf x}_n)$  when $n$ is sufficiently big, for all sequences $({\bf x}_k)_{k\in \N}$.  Then $\varepsilon=p_\alpha(\beta\bar{{\bf x }})>p_\alpha(T_n({\bf x}_n))$. Since this holds for all $\varepsilon>0$ and all $\alpha\in A$ we obtain $T_n({\bf x}_n)\xrightarrow{\tau}_n {\bf 0}$.

\medskip

$\Leftarrow$ Suppose by contradiction that ${\bf x }\prefs {\bf y }$ but there is a sequence $({\bf z}_k)_{k\in \N}$ such that  for all $k\in \N$ there is a $N_k$ such that ${\bf y } + T_{N_k}({\bf z}_{N_k})\pref {\bf x }$. Since $T_{N_k}({\bf z}_{N_k})\xrightarrow{\tau}_k {\bf 0}$, by continuity of $\pref$ we obtain ${\bf y }\pref {\bf x }$, a contradiction.

\end{proof}

As in Section~\ref{sec:Nblind}, we look for the finest eventually blind topology, since we want to have as many preferences as possible that are continuous with respect to it.

\begin{defi}\label{def:event-blind-topo}
We denote $\tau^b$ the finest eventually blind locally convex topology over $l^\infty$. The family of seminorms generating it is denoted $\Gamma^b$.
\end{defi}

The following proposition shows that  $\tau^b$ exists, and characterizes the family $\Gamma^b$ of eventually blind seminorms generating it.

\begin{propo}\label{prop:exist_T^b-and-seminorm}
$\tau^b$ exists and it is generated by the family $\Gamma^b=\{p_\alpha: \alpha\in A\}$ where for all $\alpha$ there exists $N$ such that $p_\alpha(T_n({\bf x}_n))=0$ for all $n\geq N$, for all $({\bf x}_k)_{k\in \N}$.
\end{propo}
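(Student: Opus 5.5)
The plan is to reduce the statement to a reformulation of the defining property of $\Gamma^b$ and then argue in two directions, as in the proof of Proposition~\ref{prop:N-blind_topo}: show that $\Gamma^b$ generates an eventually blind topology, and that every eventually blind locally convex topology is generated by seminorms that already lie in $\Gamma^b$.

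\emph{Preliminary reformulation.} I will first note that a seminorm $p$ satisfies ``there exists $N$ with $p(T_n({\bf x}_n))=0$ for all $n\geq N$ and all $({\bf x}_k)_k$'' if and only if there exists $N$ with $p(T_N({\bf x}))=0$ for all ${\bf x}\in l^\infty$.
\begin{itemize}
\item For the forward direction, take the constant sequence ${\bf x}_k={\bf x}$ and $n=N$.
\item For the converse, use that for $n\geq N$ we have $T_n({\bf x}_n)=T_N(T_n({\bf x}_n))$.
\end{itemize}
By the same identity, the property at $N$ propagates to every $n\geq N$. Thus $\Gamma^b=\bigcup_{N}\Gamma^b_N$, where $\Gamma^b_N$ is the family of $N$-blind seminorms. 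In particular $\Gamma^b$ contains the coordinate seminorms $p_k$ (Example~\ref{ex:N-blind-seminorm}), so it is nonempty.

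\emph{Step 1: $\Gamma^b$ generates an eventually blind topology.} Let $\tau$ be the topology generated by $\Gamma^b$. Fix any sequence $({\bf x}_k)_k$ and any $p\in\Gamma^b$, with associated $N_p$. Then $p(T_n({\bf x}_n))=0$ for all $n\geq N_p$, so $p(T_n({\bf x}_n))\to_n 0$. Since this holds for every $p\in\Gamma^b$, we get $T_n({\bf x}_n)\xrightarrow{\tau}_n{\bf 0}$, and Proposition~\ref{prop:eventuallyblind-tails} gives that $\tau$ is eventually blind.

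\emph{Step 2: every eventually blind topology is coarser.} Let $\tau'$ be an eventually blind locally convex topology, generated by a family $\Gamma'$. It suffices to show $\Gamma'\subseteq\Gamma^b$, since then $\tau'\subseteq\tau$ and $\tau$ is the finest eventually blind topology, which proves existence of $\tau^b$ and $\Gamma^b$ as its generating family.

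Take $q\in\Gamma'$ and suppose, for contradiction, that $q\notin\Gamma^b$. By the reformulation, for every $n$ there is ${\bf w}_n$ with $q(T_n({\bf w}_n))>0$. Using homogeneity of $q$, set
$$
{\bf x}_n=\frac{T_n({\bf w}_n)}{q(T_n({\bf w}_n))}.
$$
Since $T_n$ is idempotent, $T_n({\bf x}_n)={\bf x}_n$, and hence $q(T_n({\bf x}_n))=1$ for every $n$. So $T_n({\bf x}_n)$ does not converge to ${\bf 0}$ in $\tau'$, contradicting Proposition~\ref{prop:eventuallyblind-tails}. Hence $q\in\Gamma^b$.

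The main obstacle is this step. One must see that the failure of the property for every $N$ yields a whole sequence of witnesses, which uses the monotonicity in $N$ from the reformulation. One must also normalize these witnesses so that they stay bounded away from zero, which is exactly the homogeneity argument above. No domination estimates are needed, because each such $q$ lies in $\Gamma^b$ directly.
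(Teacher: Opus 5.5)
Your proof is correct and follows essentially the same route as the paper. You show that the generated topology is eventually blind via Proposition~\ref{prop:eventuallyblind-tails}, then use the normalization ${\bf x}_n = T_n({\bf w}_n)/q(T_n({\bf w}_n))$ to force every seminorm of an eventually blind topology into $\Gamma^b$. The differences are only organizational: the paper passes through the auxiliary family $\hat{\Gamma}^b$ of seminorms with $p(T_n({\bf x}_n))\to_n 0$ and picks witnesses at indices $n_k\geq k$. Your reformulation $\Gamma^b=\bigcup_N\Gamma^b_N$ (via $T_n=T_N\circ T_n$ for $n\geq N$) instead gives a witness at every $n$, so you can skip that intermediate family and the subsequence bookkeeping.
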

\begin{proof}
Let $\tau$ be the topology generated by the family $\hat{\Gamma}^b$ of all seminorms $\{p_\alpha: \alpha\in A\}$,  such that $p_\alpha(T_n({\bf x}_n))\rightarrow_n 0$ for all sequences $({\bf x}_k)_{k\in \N}$. Clearly  $\varnothing \neq \Gamma^b_N\subset \Gamma^b\subseteq \hat{\Gamma}^b\;\forall N$, hence the locally convex topology $\tau^b$ exists and is eventually blind by Proposition~\ref{prop:eventuallyblind-tails}. Moreover, it is the finest since $\hat{\Gamma}^b$ contains all the seminorms with this property.\\
We show now that  $\Gamma^b=\hat{\Gamma}^b$. Suppose by contradiction that there exists $p_\alpha\in \hat{\Gamma}^b$ such that for all $k\in \N$ there is $n_k\geq k$ and ${\bf x}_{n_k}\in l^\infty$ such that $p_\alpha(T_{n_k}({\bf x }_{n_k}))>0$. Define the new sequence of streams $\bar{{\bf x }}_n$
$$
\bar{{\bf x }}_n=\begin{cases}
{\bf 0} & \text{ if } n\neq n_k \\
\frac{{\bf x }_{n_k}}{p_\alpha(T_{n_k}({\bf x }_{n_k}))} &\text{ otherwise.}
\end{cases}
$$
Note that for all $n$, $\bar{{\bf x }}_n\in l^\infty$. However,  $p_\alpha(T_{n_k}(\bar{{\bf x }}_{n_k}))=1$ for all $k$ and therefore $T_n(\bar{\bf x}_n)\not\xrightarrow{\tau}_n {\bf 0}$, contradicting Proposition~\ref{prop:eventuallyblind-tails}.
\end{proof}

Coherently with the nomenclatures given above, we call \textit{eventually blind seminorms} the ones with the property given in Proposition \ref{prop:exist_T^b-and-seminorm}.

\begin{remark}\label{rmk:T^b_subst_T^m}
Note that for all $N\in \N$, $N$-blind seminorms are eventually blind seminorms. Moreover,  eventually blind seminorms are also myopic seminorms (see the discussion above Proposition~\ref{prop:Nblind-seminorm}). Therefore for all $N\in \N$
$$
\tau^b_N\subsetneq\tau^b\subseteq \tau^m.
$$
\end{remark}

We are now ready to present the main result of the paper. Proposition~\ref{prop:Tb=Tp} below characterizes the product topology  $\tau^p$, showing that it is the finest eventually blind locally convex topology over $l^\infty$.

\begin{propo}\label{prop:Tb=Tp}
$\tau^b=\tau^p$.
\end{propo}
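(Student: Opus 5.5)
We prove the two inclusions $\tau^p\subseteq\tau^b$ and $\tau^b\subseteq\tau^p$, working throughout with the generating families of seminorms.

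\emph{Inclusion $\tau^p\subseteq\tau^b$.} The product topology is generated by $\{p_k: k\in\N\}$ with $p_k({\bf x})=|x_k|$. For each fixed $k$ we have $p_k(T_n({\bf x}_n))=0$ for all $n\geq k$ and every sequence $({\bf x}_j)_j$, since $T_n({\bf x}_n)$ vanishes in positions $0,\dots,n$. Hence every $p_k$ is an eventually blind seminorm in the sense of Proposition~\ref{prop:exist_T^b-and-seminorm}, so $p_k\in\Gamma^b$ (with $N=k$). Equivalently, $p_k\in\Gamma^b_k\subseteq\Gamma^b$ by Proposition~\ref{prop:N-blind_topo}. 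Since $\tau^b$ is generated by a family containing all the $p_k$, it is finer than $\tau^p$.

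\emph{Inclusion $\tau^b\subseteq\tau^p$.} It suffices to show that each $q\in\Gamma^b$ is dominated by finitely many product seminorms. By Proposition~\ref{prop:exist_T^b-and-seminorm}, there is $N$ such that $q(T_n({\bf x}_n))=0$ for all $n\geq N$ and all sequences $({\bf x}_j)_j$. Taking the constant sequence ${\bf x}_j={\bf x}$ and $n=N$ gives $q(T_N({\bf x}))=0$ for every ${\bf x}$. It also gives $q(T_n({\bf x}))=0$ for all $n\geq N$, so $q$ is an $N$-blind seminorm and $q\in\Gamma^b_N$.

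We then reuse the computation in the proof of Proposition~\ref{prop:N-blind_topo}. By subadditivity and $q(T_N({\bf x}))=0$,
$$q({\bf x})\leq q(H_N({\bf x}))\leq \sum_{i=0}^N |x_i|\,q({\bf e}_i)\leq c\sum_{i=0}^N p_i({\bf x}),$$
where $c=\max_{i\leq N}q({\bf e}_i)$. If $c=0$ the domination is trivial. So $q$ is dominated by $p_0,\dots,p_N$, which means $q$ is $\tau^p$-continuous.

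Since every generating seminorm of $\tau^b$ is $\tau^p$-continuous, $\tau^b\subseteq\tau^p$, and together with the first inclusion we get $\tau^b=\tau^p$.

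The only delicate point is extracting a single uniform horizon $N$ for a given $q$. This is exactly the content of Proposition~\ref{prop:exist_T^b-and-seminorm}, which upgrades ``$q(T_n({\bf x}_n))\to 0$ for every sequence'' to ``$q$ vanishes on all tails beyond some $N$'' via the rescaling argument. Once that result is taken as given, the remaining steps are routine domination estimates.
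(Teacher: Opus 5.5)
Your proof is correct, and for the inclusion $\tau^b\subseteq\tau^p$ it takes a genuinely different route from the paper. Your first inclusion is just Lemma~\ref{lem:1} phrased via seminorms.

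The paper gets $\tau^b\subseteq\tau^p$ by duality. It first computes $(l^\infty,\tau^b)'=c_{00}$ in Proposition~\ref{prop:dual_T^b}, which relies on $\tau^b\subseteq\tau^m$ and on the result of \cite{brown1981myopic} that $(l^\infty,\tau^m)'=l^1$. It then invokes Lemma~\ref{lem:dual_prod}: $\tau^p$ is the finest locally convex topology with dual $c_{00}$.

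You instead use the uniform horizon from Proposition~\ref{prop:exist_T^b-and-seminorm} to place every generating seminorm of $\tau^b$ in some $\Gamma^b_N$. The estimate of Proposition~\ref{prop:N-blind_topo} then dominates it by $c\sum_{i\le N}p_i$. In effect you show that $\tau^b$ is the supremum of the nested topologies $\tau^b_N$, which is $\tau^p$.

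Your route is more elementary and self-contained. It needs no dual-space computation or external theorem, and it turns $(l^\infty,\tau^b)'=c_{00}$ into a corollary rather than an ingredient.

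It also bypasses the weakest step of the paper's argument. The proof of Lemma~\ref{lem:dual_prod} asserts that any strictly finer locally convex topology has a neighborhood of ${\bf 0}$ of the form $\prod_{n\in J}(-\delta_n,\delta_n)\times\prod_{n\notin J}\mathbb{R}$ with $J$ infinite. This fails, for example, for the topology generated by the $p_k$ together with $|\phi({\bf x})|$, where $\phi$ is a Banach limit. Each neighborhood of ${\bf 0}$ in that topology contains the whole line $\mathbb{R}{\bf e}_n$ for all sufficiently large $n$. The lemma itself is true, but a rigorous proof needs a Mackey--Arens-type argument.

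What the paper's route offers in exchange is the structural parallel with \cite{brown1981myopic}: it identifies $\tau^p$ as the Mackey-type topology for the price space $c_{00}$.
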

\begin{proof}
The proof consists in a sequence of lemmata.

\begin{lema}\label{lem:1}
 $\tau^b\supseteq \tau^p$.
\end{lema}
\begin{proof}
  Note that for all $({\bf x}_k)_{k\in \N}$, $T_n({\bf x}_n)\xrightarrow{\tau^p}_n {\bf 0}$ and therefore $\tau^p$ is an eventually blind topology by Proposition~\ref{prop:eventuallyblind-tails}. Since $\tau^b$ is the finest eventually blind topology, we obtain $\tau^b\supseteq \tau^p$.  
\end{proof}

\begin{lema}\label{lem:dual_ev_blind}
 The dual of $l^\infty$ w.r.t. $\tau^b$ is $c_{00}$, i.e. $(l^\infty,\tau^b)'=c_{00}$.
\end{lema}
\begin{proof}
This is proved in Proposition~\ref{prop:dual_T^b}, Section~\ref{sec:dual_applications}.
\end{proof}

\begin{lema}\label{lem:dual_prod}
$\tau^p$ is the finest topology such that its dual is $c_{00}$.
\end{lema}
\begin{proof}
Consider a locally convex topology $\tau$, strictly finer than $\tau^p$, i.e. $\tau\supsetneq \tau^p$. Then there exists a set of indices $J\subseteq\N$ with infinite cardinality so that a neighborhood $\mathcal{N}$ of ${\bf 0}$ is given by $\mathcal{N}=\prod_{n\in J}(-\delta_n,\delta_n)\times \prod_{n\not\in J} \mR$, where $\delta_n>0\;\forall n\in J$. Take a sequence ${\bf b}\in l^1$ such that ${b}_k=0$ if $k\not\in J$ and let $\sum_{n\in J}|b_n|=M$.\\
Define now the  function $f:l^\infty\rightarrow\mR$ by $f({\bf x})=\sum_n x_n b_n$ for all ${\bf x}\in l^\infty$. Clearly, $f$ is well defined and linear. We prove now that $f$ is continuous at ${\bf 0}$. Fix $\epsilon>0$ and take $\mathcal{O}\in \tau$, an open neighborhood of ${\bf 0}$,  defined by $\mathcal{O}=\{{\bf z}\in l^\infty : |z_k|<\frac{\epsilon}{M} \text{ if } k\in J\}$. Then for all ${\bf z}\in \mathcal{O}$ we have
$$
|f({\bf z})-f({\bf 0})| = |\sum_n z_n b_n| 
    \leq \sum_{n} |z_n b_n|
    =\sum_{n\in J} |z_n|\cdot |b_n|
    \leq\frac{\epsilon}{M} \sum_{n\in J}  |b_n|=\epsilon.
$$
This implies that $f$ is continuous at ${\bf 0}$ and hence on all $l^\infty$. \\
Since ${\bf b}\not\in c_{00}$, this implies that the dual space of $\tau$ is such that $(l^\infty,\tau)'\supsetneq c_{00}$. Hence $\tau^p$ is the finest topology such that its dual is $c_{00}$.
\end{proof}

\begin{lema}\label{lem:4}
 $\tau^b\subseteq \tau^p$.
\end{lema}
\begin{proof}
By Lemma~\ref{lem:dual_ev_blind}, $(l^\infty,\tau^b)'=c_{00}$. By Lemma~\ref{lem:dual_prod},  $\tau^p$ is the finest topology with dual $c_{00}$. Therefore  $\tau^b\subseteq \tau^p$.
\end{proof}

\noindent Lemma~\ref{lem:1} and Lemma~\ref{lem:4} imply that $\tau^b=\tau^p$.
\end{proof}

While continuity with respect to the Mackey topology was behaviorally characterized by the concept of myopia in \cite{brown1981myopic}, the product topology received much less attention. The only paper that investigates possible characterization of the product topology that we are aware of is the one of \cite{streufert1993abstract}, who studied in an abstract framework a concept called tail insensitivity.  Both product and Mackey topologies have been fruitfully used in the literature of general equilibrium in the seminal paper of \cite{peleg1970markets} (product topology) and  
\cite{bewley1972existence} (Mackey topology). Proposition~\ref{prop:Tb=Tp} fills a gap in the literature as it gives an explicit and meaningful economic interpretation to preferences that are continuous with respect to the product topology. 

We conclude this section with some examples of eventually blind utility functions, and we study their continuity properties.

\begin{ex}\label{ex:utilities}
Consider the following utility functions.
\begin{enumerate}
    \item Let $v:\mR\rightarrow\mR$ a continuous function. Define the utility function $u:l^\infty \to \mR$ by $u({\bf x})= \sum_{k=0}^N v(x_k)$. Then $u$ is $N$-blind, and it is easy to see that it is continuous with respect to $\tau^b_N$.
    
    \item Let  $v:\mR\rightarrow \mR$, $v(x):=\frac{x}{1+|x|}$, and define $u:l^\infty \to \mR$, by  $u({\bf x}):=\sum_{n=0}^\infty\frac{v(x_n)}{2^{n+1}}$. This utility function is eventually blind and $\tau^b$-continuous. However, it is neither $N$-blind nor $\tau^b_N$-continuous.
    
    Function  $u$ is well-defined since  $\sum_n\left|\frac{v(x_n)}{2^{n+1}}\right|=\sum_{n=0}^\infty\frac{|v(x_n)|}{2^{n+1}}\leq \sum_{n=0}^\infty\frac{1}{2^{n+1}}=1$.  
    
    We prove continuity with respect to $\tau^b$. Since $v$ is continuous, $f_n(x)=\frac{v(x)}{2^{n+1}}$ is a continuous real function. Defining $g_n:l^\infty\to \mR$ as $g_n=f_n\,\circ p_n$ where $p_n$ is the $n$-th projection, we immediately get that $g_n$ is a $\tau^b$-continuous function. For any $m\in \mN$ the function $u_m({\bf x}):=\sum_{n=0}^m\frac{v(x_n)}{2^{n+1}}=\sum_{n=0}^mg_n({\bf x})$ is a sum of  $\tau^b$-continuous functions and therefore is $\tau^b$-continuous. Finally , $\forall {\bf x}\in l^\infty$, $|u({\bf x})-u_m({\bf x})|=\left|\sum_{n=m+1}^\infty\frac{v(x_n)}{2^{n+1}} \right|\leq \sum_{n=m+1}^\infty\left|\frac{v(x_n)}{2^{n+1}} \right|\leq \sum_{n=m+1}^\infty\frac{1}{2^{n+1}}=\frac{1}{2^{m+1}}\to_m 0$, hence the sequence of $\tau^b$-continuous functionals $(u_m)_m$ converges uniformly to $u$, which implies that $u$ is also $\tau^b$-continuous. By definition of $\tau^b$, the utility function is eventually blind.

    We  show that $u$ is not $\tau^b_N$-continuous at ${\bf 0}$. Consider the constant sequence ${\bf 1}=(1,1,\dots)$ and note that $(i)$ $u(T_{N+1}({\bf 1}))=\frac{1}{2^{N+2}}$; $(ii)$ $T_{N+1}({\bf 1})$ is in every open neighborhood of ${\bf 0}$. Let $\varepsilon< \frac{1}{2^{N+2}}$ and consider the open neighborhood $(-\varepsilon,\varepsilon)$ of $u({\bf 0})=0$. If $u$ were to be $\tau^b_N$-continuous then there would exist an open neighborhood $V$ of ${\bf 0}$ such that $u(V)\subset (-\varepsilon,\varepsilon)$. But every such $V$ contains $T_{N+1}({\bf 1})$ by $(ii)$, while  $u(T_{N+1}({\bf 1}))>\varepsilon$ by $(i)$. Hence, no such $V$ exists and $u$ is discontinuous at ${\bf 0}$.

    \item $u({\bf x})=\sum_n\frac{x_n}{2^{n+1}}$ is $\tau^{Mc}$-continuous and $\tau^b$-discontinuous.

    Clearly, $u$ is well-defined since  $\sum_n\left|\frac{x_n}{2^{n+1}}\right|\leq \|{\bf x}\|_\infty\sum_n\frac{1}{2^{n+1}}=\|{\bf x}\|_\infty<+\infty$. Note that sequence $(2^{-(n+1)})_n\in l^1$. It is well known that every element of ${\bf a}\in l^1$ defines a linear $\tau^{Mc}$-continuous function $f_{{\bf a}}$ over $l^\infty$ by $f_{{\bf a}}({\bf x})=\sum_n x_na_n$. Since $u$ is defined precisely in this way, it is $\tau^{Mc}$-continuous. To see that $u$ is not $\tau^b$-continuous, take the sequence defined for all $k\in \N$ by ${\bf z}_k=T_k(2^{k+1}{\bf 1})$, where ${\bf 1}$ is the constant sequence equal to one. Clearly  ${\bf z}_k\xrightarrow{\tau^b}_n {\bf 0}$. However, $u({\bf z}_k)=\sum_{n=k+1}^\infty\frac{2^{k+1}}{2^{n+1}}=2^{k+1}\sum_{n=k+1}^\infty\frac{1}{2^{n+1}}=2^{k+1}\frac{1}{2^{k+1}}=1$, for all $k\in\N$. Therefore ${u}({\bf z}_k)\not\rightarrow_k {u}({\bf 0})=0$ and $u$ is $\tau^b$-discontinuous.

    \item   Let $\succeq_l$ be the intertemporal lexicographic preference, that is, that ${\bf x}\succ_l {\bf y}\LR \exists N\in \mN$ such that $x_N>y_N$ and $x_n\geq y_n\;\forall n\leq N-1$ and ${\bf x}\sim_l {\bf y} \LR {\bf x}={\bf y}$. Then $\succeq_l$ is eventually blind, but it is not $\tau^\infty$-continuous (and hence not $\tau^b$-continuous).

    Suppose ${\bf x}\succ_l {\bf y}$, then there is $N$ such that  $x_N>y_N$ and $x_n\geq y_n\;\forall n\leq N-1$. Hence for all sequences $({\bf z}_k)_{k\in \N}$ with ${\bf z}_k \in l^{\infty}$,   ${\bf x } \prefs {\bf y } + T_n({\bf z}_n)$ for all $n\geq N$. 

    We show that $\succeq_l$ is not  $\tau^\infty$-continuous. Consider the sequence ${\bf z}_k=\left(\frac{1}{k},-1,-1,\dots \right)$. Then for all $k\in \N$, ${\bf z}_k\succ_l {\bf 0}$ and moreover ${\bf z}_k\xrightarrow{\tau^\infty}_k \left(0,-1,-1,\dots \right)$. However, ${\bf 0}\succ_l \left(0,-1,-1,\dots \right)$, hence $\succeq_l$ is not  $\tau^\infty$-continuous.

\end{enumerate}
\end{ex}


\section{Dual Spaces}\label{sec:dual_applications}

The objective of this section is to study the dual spaces of the $N$-blind topology $\tau^b_N$, and the eventually blind topology $\tau^b$. Dual spaces are important objects as continuous linear functionals generalize the notion of prices when the space of commodities is infinite dimensional. Note that in Section~\ref{sec:eventual-blind}, Proposition~\ref{prop:Tb=Tp}, we proved that the $\tau^b=\tau^p$, and it is well known that the dual of the product topology is $c_{00}$. However, to prove Proposition \ref{prop:Tb=Tp} we claimed in Lemma~\ref{lem:dual_ev_blind} that $(l^\infty,\tau^b)'=c_{00}$, which we still need to show.

We begin by characterizing all linear functions which are $\tau^{b}_N$-continuous (we recall that the sets $c_{00}^N$ and $c_{00}$ appearing in the propositions below have been defined in Section~\ref{sec:MathPrelim}).

\begin{propo}\label{prop:dual_T^b_N}
The following assertions are equivalent: 
\begin{enumerate}
    \item[(i)] The linear functional $f:l^{\infty}\to \mathbb{R}$ is $\tau^{b}_N$-continuous.
    \item[(ii)] $f$ is a linear functional and $f(T_N)=\{0\}$,  where $T_N=\{ T_N({\bf x}):{\bf x}\in l^\infty\}$.
    \item[(iii)] There exists $ {\bf a}\in c_{00}^N$ such that $f({\bf x})=\sum_{n=0}^\infty a_nx_n$.
\end{enumerate}
\end{propo}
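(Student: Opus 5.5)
I will prove the cycle $(i)\Rightarrow(ii)\Rightarrow(iii)\Rightarrow(i)$, relying on Proposition~\ref{prop:N-blind_topo} (the topology $\tau^b_N$ is generated by $P^N=\{p_0,\dots,p_N\}$) and Corollary~\ref{cor:non-Haussdorf} (the closure of $\{{\bf 0}\}$ is $T_N$).

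For $(i)\Rightarrow(ii)$, let $f$ be linear and $\tau^b_N$-continuous, and fix ${\bf x}\in l^\infty$. By Corollary~\ref{cor:non-Haussdorf}, $T_N({\bf x})$ lies in every $\tau^b_N$-neighborhood of ${\bf 0}$. Given $\varepsilon>0$, continuity at ${\bf 0}$ yields a neighborhood $V$ with $f(V)\subset(-\varepsilon,\varepsilon)$. Hence $|f(T_N({\bf x}))|<\varepsilon$ for every $\varepsilon>0$, so $f(T_N({\bf x}))=0$.

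An alternative is to use the seminorm characterization. Continuity of a linear functional is equivalent to $|f({\bf x})|\le c\sum_{k=0}^N p_k({\bf x})$ for some $c>0$. This follows from the basic neighborhoods generated by $P^N$ by a standard scaling argument. The bound vanishes on $T_N({\bf x})$, which again gives $(ii)$.

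For $(ii)\Rightarrow(iii)$, write ${\bf x}=H_N({\bf x})+T_N({\bf x})=\sum_{n=0}^N x_n{\bf e}_n+T_N({\bf x})$. By linearity and $(ii)$,
\[
f({\bf x})=\sum_{n=0}^N x_n f({\bf e}_n).
\]
Set $a_n=f({\bf e}_n)$ for $n\le N$ and $a_n=0$ for $n>N$. Then ${\bf a}\in c_{00}^N$ and $f({\bf x})=\sum_{n=0}^\infty a_nx_n$. This step is purely algebraic and needs no continuity.

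For $(iii)\Rightarrow(i)$, $f$ is clearly linear, and
\[
|f({\bf x})-f({\bf y})|\le \max_{n\le N}|a_n|\sum_{k=0}^N p_k({\bf x}-{\bf y}).
\]
Each $p_k$ with $k\le N$ is $\tau^b_N$-continuous by Proposition~\ref{prop:N-blind_topo}. So for any $\varepsilon>0$, the basic neighborhood $\{{\bf z}: p_k({\bf z})<\varepsilon/((N+1)(\max|a_n|+1)),\ k=0,\dots,N\}$ is mapped into $(-\varepsilon,\varepsilon)$. This gives continuity at ${\bf 0}$, and by linearity continuity everywhere.

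The only slightly delicate point is $(i)\Rightarrow(ii)$: we must justify that a continuous functional is constant on the closure of ${\bf 0}$ in a non-Hausdorff topology. This is handled directly by the neighborhood argument above, so I expect no real obstacle.
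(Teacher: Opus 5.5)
Your proof is correct and follows essentially the same route as the paper. It uses the cycle $(i)\Rightarrow(ii)\Rightarrow(iii)\Rightarrow(i)$: $f$ vanishes on $T_N=\overline{\{{\bf 0}\}}$, the decomposition ${\bf x}=H_N({\bf x})+T_N({\bf x})$ gives the finite representation, and an explicit basic neighborhood built from $P^N$ gives continuity. Your radius $\varepsilon/((N+1)(\max_n|a_n|+1))$ is slightly cleaner than the paper's $\varepsilon/((N+1)\max_i a_i)$, which lacks absolute values and can divide by zero.
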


\begin{proof}
$(i)\R (ii)$. By Corollary~\ref{cor:non-Haussdorf} we have that $T_N=\overline{\{{\bf 0}\}}$. Therefore, since 
$f$ is $\tau^{b}_N$-continuous, $|f(T_N({\bf x}))-0|<\epsilon$ for all $\epsilon>0$ and for all ${\bf x}\in l^\infty$. Hence  $f(T_N({\bf x}))=0$ for all ${\bf x}\in l^\infty$.

\medskip

$(ii)\R (iii)$. Let $a_n:=f({\bf 1}_n)$ for each $n\in \mN$. For all ${\bf x}\in l^\infty$ we have
\begin{align*}
    f({\bf x})&=f(H_N({\bf x})+T_N({\bf x}))\\
    &= f(H_N({\bf x}))+f(T_N({\bf x}))\\
    &= f(H_N({\bf x})),
\end{align*}
where the second equality follows from linearity of $f$ and third one follows from  $f(T_N)=\{0\}$. Again by linearity for all ${\bf x}\in l^\infty$, $f({\bf x})=f(H_N({\bf x}))=\sum_{n=0}^N a_nx_n$. Hence ${\bf a}\in c_{00}^N$.

\medskip

$(iii)\R (i)$. Let ${\bf a}\in c_{00}^N$ and $f({\bf x})=\sum_{n=0}^N a_nx_n$. Since $f$ is clearly linear, we only need to prove that it is continuous at ${\bf 0}$. By Proposition~\ref{prop:N-blind_topo}, $\tau^{b}_N$ is generated by the family of seminorms $P^N=\{p_k:k=0,\dots, N\}$, where $p_k({\bf x })=|x_k|$. Fix $\epsilon>0$ and consider the neighborhood $\mathcal{N}$ of ${\bf 0}$ defined by 
$$
\mathcal{N}=\left\{ {\bf x }\in l^\infty : |x_k|<\frac{\epsilon}{(N+1) \max_i a_i}, k=0,\dots, N \right\}.
$$
Then for all ${\bf x} \in \mathcal{N}$ one has
$$
|f({\bf x})-0|\leq \sum_{i=0}^N |a_ix_i|\leq (\max_{i} a_i) \sum _{i=0}^N|x_i|<(\max_{i} a_i)  (N+1) \frac{\epsilon}{(N+1) \max_i a_i}=\epsilon.
$$
Hence $f$ is $\tau^{b}_N$-continuous.
\end{proof}

When agents have preferences that are continuous with respect to $\tau^{b}_N$ they are future-blind and they treat the infinite dimensional space $l^\infty$ as a space with dimension $N+1$. Hence, as shown in Proposition~\ref{prop:dual_T^b_N}, also the dual space will be finite dimensional and can be identified with $c_{00}^N$.

We now study the dual space of $\tau^b$.

\begin{propo}\label{prop:dual_T^b} The following assertions are equivalent:
\begin{enumerate}
    \item[(i)] The linear functional $f$ is $\tau^{b}$-continuous.
    \item[(ii)]  $f$ is a linear functional and there exists $N$ such that $f(T_N)=\{0\}$. 
    \item[(iii)] There exists $ {\bf a}\in c_{00}$ such that $f({\bf x})=\sum_{n=0}^\infty a_nx_n$.
\end{enumerate}
\end{propo}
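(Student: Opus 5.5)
The plan is to prove the cycle $(i)\R(ii)\R(iii)\R(i)$. I will not use Proposition~\ref{prop:Tb=Tp} or the known dual of $\tau^p$: Lemma~\ref{lem:dual_ev_blind} relies on the present result, so that would be circular. Instead I will work directly with the description of $\Gamma^b$ in Proposition~\ref{prop:exist_T^b-and-seminorm}.

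For $(i)\R(ii)$, I use continuity of $f$ at ${\bf 0}$. There are finitely many seminorms $q_1,\dots,q_k\in\Gamma^b$ and $\epsilon>0$ such that $|f({\bf x})|<1$ whenever $q_i({\bf x})<\epsilon$ for all $i$. By Proposition~\ref{prop:exist_T^b-and-seminorm}, each $q_i$ comes with an $N_i$ such that $q_i(T_n({\bf x}_n))=0$ for all $n\geq N_i$ and all sequences $({\bf x}_k)_k$. Taking constant sequences gives $q_i(T_{N_i}({\bf x}))=0$ for every ${\bf x}$. Set $N=\max_i N_i$. Since $T_N({\bf x})=T_{N_i}(T_N({\bf x}))$, we get $q_i(T_N({\bf x}))=0$ for all $i$. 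By homogeneity, $q_i(tT_N({\bf x}))=0$ for every $t\in\mR$, so $tT_N({\bf x})$ lies in the neighbourhood above. Hence $|t|\,|f(T_N({\bf x}))|<1$ for all $t$, which forces $f(T_N({\bf x}))=0$, i.e.\ $f(T_N)=\{0\}$. I expect this step to be the main obstacle. Its key points are choosing a common $N$ for finitely many seminorms and using the scaling argument, rather than appealing to the closure of $\{{\bf 0}\}$ as in the $N$-blind case.

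For $(ii)\R(iii)$, I repeat the argument of Proposition~\ref{prop:dual_T^b_N}, step $(ii)\R(iii)$. Set $a_n=f({\bf e}_n)$ for $n\le N$ and $a_n=0$ for $n>N$. Linearity and $f(T_N)=\{0\}$ give $f({\bf x})=f(H_N({\bf x}))=\sum_{n=0}^N a_nx_n$. Hence ${\bf a}\in c_{00}^N\subseteq c_{00}$.

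For $(iii)\R(i)$, take ${\bf a}\in c_{00}$ and choose $N$ with $a_n=0$ for all $n>N$, so ${\bf a}\in c_{00}^N$. Proposition~\ref{prop:dual_T^b_N} then shows that $f$ is $\tau^b_N$-continuous. By Remark~\ref{rmk:T^b_subst_T^m}, $\tau^b_N\subseteq\tau^b$. A function continuous for a coarser topology is continuous for a finer one, so $f$ is $\tau^b$-continuous. One could alternatively bound $|f({\bf x})|\le \max_i|a_i|\sum_{i\le N}p_i({\bf x})$, noting that $p_i\in\Gamma^b_N\subseteq\Gamma^b$.
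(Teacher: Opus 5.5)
Your proof is correct. For $(ii)\Rightarrow(iii)$ and $(iii)\Rightarrow(i)$ it matches the paper; the difference is in $(i)\Rightarrow(ii)$.

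\textbf{The paper's route.} It uses $\tau^b\subseteq\tau^m$ (Remark~\ref{rmk:T^b_subst_T^m}) together with Brown and Lewis's result $(l^\infty,\tau^m)'=l^1$. This lets it write $f({\bf x})=\sum_n b_nx_n$ with ${\bf b}\in l^1$. It then rules out ${\bf b}\in l^1\setminus c_{00}$ using Lemma~\ref{lem:aux} and the explicit tails ${\bf x}_k=T_{k-1}({\bf c}_k)$. These tend to ${\bf 0}$ in $\tau^b$ by Proposition~\ref{prop:eventuallyblind-tails}, while $f({\bf x}_k)=1/b_k$ blows up.

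\textbf{Your route.} You use the exact-vanishing form of the generating seminorms given by Proposition~\ref{prop:exist_T^b-and-seminorm}. The finitely many seminorms controlling $f$ at ${\bf 0}$ all vanish on $T_N$ for a common $N$. So $T_N$ lies in the intersection of their kernels, and homogeneity then forces $f(T_N)=\{0\}$.

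\textbf{Comparison.} Your argument is self-contained and elementary. It needs neither the external duality theorem for the myopic (Mackey) topology nor Lemma~\ref{lem:aux}, and it never needs an a priori series representation of $f$. The paper's route relies only on the tail-convergence property of $\tau^b$, plus the external dual. However, its ``normalize and blow up'' construction essentially repeats the diagonal argument already used in Proposition~\ref{prop:exist_T^b-and-seminorm} to upgrade $p(T_n({\bf x}_n))\to 0$ to exact vanishing; you simply reuse that work.

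\textbf{A bonus.} Your observation shows that each $q\in\Gamma^b$ is $N$-blind for some $N$. Hence $q\le c\sum_{i\le N}p_i$ by Proposition~\ref{prop:N-blind_topo}, which yields $\tau^b\subseteq\tau^p$ directly, without going through Lemmas~\ref{lem:dual_ev_blind} and~\ref{lem:dual_prod}. Your alternative estimate with $\max_i|a_i|$ is also the correct form of the bound in the paper's proof of Proposition~\ref{prop:dual_T^b_N}, which writes $\max_i a_i$.
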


\begin{proof} 
We first prove an auxiliary lemma.
\begin{lema} \label{lem:aux} Let ${\bf b}\in l^1$ be such that $b_n\neq 0$ for all $n\in J$, where $J$ is an infinite subset of $\mN$. Then $J'=\{k\in J:\sum^\infty_{n=k}b_n\neq 0\}$ is also an infinite set.  
\end{lema}

\begin{proof} 
Suppose $J'$ is finite, take $N\in J$ such that $N>\max \{J'\}$ and define $J_N:=\{k\in J:k\geq N \}$, then $\sum^{\infty}_{n=k}b_n=0\,\,\forall k\in J_N$. But, given any two consecutive numbers $k_1,k_2\in J_N$, such that, say, $k_2>k_1$, one has 
$$
0=0-0=\sum^\infty_{n=k_2}b_n-\sum^\infty_{n=k_1}b_n=\sum^{k_2-1}_{n=k_1}b_n=b_{k_1}.
$$ 
The last equality comes form the fact that $k_1+1,...,k_2-1 \notin J\R b_{k_1+1}=...=b_{k_2-1}=0$. But $b_{k_1}= 0$ contradicts the definition of $J$. Therefore, $J'$ is also an infinite subset of $\mN$. 
\end{proof}

$(i)\Rightarrow (ii)$. We prove it by contradiction. As noted in Remark~\ref{rmk:T^b_subst_T^m} we have $\tau^b \subseteq \tau^m$. This implies $(l^\infty,\tau^{b})'\subseteq (l^\infty,\tau^{m})' = l^1$, where the last equality is shown in  \cite{brown1981myopic}. Therefore all continuous linear functions in $(l^\infty,\tau^{b})'$ can be written as $f({\bf x})=\sum^{\infty}_{n=0}b_n x_n$ for some ${\bf b}\in l^1$. Suppose by contradiction that we have  ${\bf b}\in l^1\setminus c_{00}$. For such ${\bf b}$ we can find two infinite sets  $J$ and $J'$  as in Lemma~\ref{lem:aux}. For all $k\in J'$, define the constant sequence ${\bf c}_k(n)= \frac{1}{b_k\cdot \sum^\infty_{i=k}b_i}$ for all $n\in \N$. Take now the sequence  $({\bf x}_k)_{k\in \mN}$ of elements in $l^\infty$ defined by 
$$
{\bf x }_k=\begin{cases}
{\bf 0} & \text{ if } k=0\;\textrm{or} \;k\not\in J' \\
T_{k-1}({\bf c}_k) &\text{ if } k\in J'
\end{cases}
$$
and note that for $k$ in $J'$ one has
$$
f({\bf x}_k)= \sum^\infty_{i=k}b_i \left(\frac{1}{b_k\cdot \sum^\infty_{i=k}b_i} \right)=\frac{1}{b_k}\frac{ \sum^\infty_{i=k}b_i}{ \sum^\infty_{i=k}b_i}=\frac{1}{b_k}.
$$
By  Lemma~\ref{lem:aux}, $J'$ is an infinite set, therefore $(f({\bf x}_k))_{k\in J'}$ is a subsequence of $(f({\bf x}_k))_{k\in \mN}$ and because ${\bf b}\in l^1$, then $\lim_{k\to\infty} b_k = 0$. Moreover, by Proposition~\ref{prop:eventuallyblind-tails} we have that ${\bf x}_k=T_{k-1}({\bf x}_k)\xrightarrow{\tau^b}_k {\bf 0}$. Thus,
$$\lim_{\substack{k\in J' \\ k\to \infty}} b_k=0 \R \lim_{\substack{k\in J' \\ k\to \infty}} f({\bf x}_k)=\lim_{\substack{k\in J' \\ k\to \infty}}\frac{1}{b_k}\neq  0 \R \lim_{k\to  \infty}  f({\bf x}_k)\neq 0=f({\bf 0}).$$ 
Hence, $f$ is not $\tau^{b}-$continuous. 

\medskip

$(iii)\R(ii)$. Let ${\bf a}\in c_{00}$ and $f({\bf x})=\sum^\infty_{n=0}a_n x_n$. Since there exists $N$ such that $a_n=0$ for all $n\geq N$, for all ${\bf x}\in l^\infty$
$$
f(T_N({\bf x}))=\sum^N_{n=0} a_n 0+\sum^\infty_{n=N+1} 0 x_n=0,
$$
Therefore there exists $N$ such that $f(T_N)=\{0\}$.

\medskip

$(ii)\R(i)$. If $f$ is a linear functional such that $f(T_N)=\{0\}$ for some $N\in \mN$, then by Proposition \ref{prop:dual_T^b_N}, $f$ is $\tau^{b}_N$-continuous, and consequently is $\tau^{b}$-continuous. 

\medskip

$(ii)\R(iii)$. This can be shown as in the proof of Proposition~\ref{prop:dual_T^b_N}.

\end{proof}




\section{Conclusions}\label{sec:conclusions}

In this article, we studied topologies that are consistent with future-blind agents. More specifically, we introduced two types of behaviors. On the one hand, $N$-blind decision makers completely neglect the future after period $N$, on the other hand, eventually blind agents overlook future periods starting from some point in time, even if some compensation is given to them. 

When the space is infinite dimensional, continuity of preferences entails strong behavioral properties. The paper's main contribution is to study locally convex topologies coherent with  $N$-blind and eventually blind behaviors, as was done in the seminal paper of \cite{brown1981myopic}. We showed that $N$-blindness is characterized by continuity with respect to the non-Hausdorff topology $\tau^b_N$, while eventual blindness characterizes the product topology $\tau^p$. To the best of our knowledge, we are the first to characterize how continuity with respect to the product topology translates into time discounting.

Finally, we studied the spaces of continuous linear functionals, interpreted as a generalized version of price vectors. When paired with $\tau^b_N$, the dual space of $l^\infty$ is equal to $c^N_{00}$, the set of sequences that are 0 after period $N$. When paired with $\tau^b$ is equal to $c_{00}$, the set of sequences with finitely many non-zero elements.

We conclude with Table \ref{tab:conclusion}, which summarizes our main results and compares them with the ones obtained by \cite{brown1981myopic}.

\begin{table}[h]
    \centering
    \begin{tabular}{c|c|c|c}
        Behavior & Topology & Tail Behavior & Dual Space \\
        \hline
        $N$-blindness & $\tau^b_N$ & $T_N=\overline{\{\bf 0\}}$  & $c_{00}^N$ \\
         \hline
        Eventual Blindness & $\tau^b=\tau^p$ &   $T_n({\bf x}_n)\xrightarrow{\tau^b}_n {\bf 0}$, $\forall ({\bf x}_k)_k$ & $c_{00}$ \\
         \hline
        Myopia & $\tau^m=\tau^{Mc}$  & $T_n({\bf x})\xrightarrow{\tau^m}_n {\bf 0}$, $\forall {\bf x}$   & $l^1$ \\
    \end{tabular}
    \caption{Behaviors and topologies.}
    \label{tab:conclusion}
\end{table}

\newpage


\bibliographystyle{plainnat}
\bibliography{references}

\end{document}